\newtheorem{theorem}{Theorem}[section]
\newtheorem{lemma}[theorem]{Lemma}
\newtheorem{definition}[theorem]{Definition}
\newtheorem{remark}[theorem]{Remark}
\newif\ifFULL
\newcommand{\R}{\mathbb{R}}
\newcommand{\eps}{\varepsilon}
\newcommand{\Omgt}{\Tilde{\Omega}}
\newcommand{\Ot}{\Tilde{O}}
\newcommand{\poly}{\mathsf{poly}}
\newcommand{\polylog}{\mathsf{polylog}}
\newcommand{\ind}{\mathbf{1}}
\newcommand{\LHS}{\mathsf{LHS}}
\newcommand{\RHS}{\mathsf{RHS}}
\newcommand{\bone}{\mathbf{1}}
\newcommand{\ALG}{\mathsf{ALG}}
\newcommand{\EO}{\mathsf{EO}}
\newenvironment{proofof}[1]{\smallskip\noindent{\bf Proof of #1.}}%
        {\hspace*{\fill}$\Box$\par}
\title{Improved Lower Bounds for Submodular Function Minimization}
\author{Anonymous Authors}
\author{
Deeparnab Chakrabarty 
\thanks{Dartmouth College, \texttt{deeparnab@dartmouth.edu}.}
\and 
Andrei Graur \thanks{Stanford University, \texttt{agraur@stanford.edu}.}
\and 
Haotian Jiang \thanks{University of Washington, \texttt{jhtdavid@cs.washington.edu}.}
\and 
Aaron Sidford \thanks{Stanford University, \texttt{ sidford@stanford.edu}.}
}
\date{\today}
\begin{document}

\begin{titlepage}
  \maketitle
  
\begin{abstract}

We provide a generic technique for constructing families of submodular functions to obtain lower bounds for submodular function minimization (SFM). Applying this technique, we prove that any deterministic SFM algorithm on a ground set of $n$ elements requires at least $\Omega(n \log n)$ queries to an evaluation oracle. This is the first super-linear query complexity lower bound for SFM and improves upon the previous best lower bound of $2n$ given by [Graur et al., ITCS 2020]. Using our construction, we also prove that any (possibly randomized) parallel SFM algorithm, which can make up to $\mathsf{poly}(n)$ queries per round, requires at least $\Omega(n / \log n)$ rounds to minimize a submodular function. This improves upon the previous best lower bound of $\tilde{\Omega}(n^{1/3})$ rounds due to [Chakrabarty et al., FOCS 2021], and settles the parallel complexity of query-efficient SFM up to logarithmic factors due to a recent advance in [Jiang, SODA 2021].

  \end{abstract}
  \thispagestyle{empty}
\end{titlepage}

%%%%%%%%%%%%%%%%%%%%%%%%%%%%%%%%%
% Sec: Introduction
%%%%%%%%%%%%%%%%%%%%%%%%%%%%%%%%%

\section{Introduction}

A real-valued function $f : 2^V \rightarrow \mathbb{R}$ defined on subsets of an $n$-element ground set $V$ is \emph{submodular} if $f(X \cup \{e\}) - f(X) \geq f(Y\cup \{e\}) - f(Y)$ for any $X \subseteq Y \subseteq V$ and $e\in V\setminus Y$.
Submodular functions are ubiquitous and include cut functions in (hyper-)graphs, set coverage functions, rank functions of matroids, utility functions in economics, and entropy functions in information theory, etc.

Given the expressive power of submodular functions, the  optimization of them has been extensively studied. The problem of submodular function minimization (SFM), i.e. $\min_{S \subseteq V} f(S)$, given black-box access to an \textit{evaluation oracle}, which returns the value $f(S)$ upon receiving a set $S \subseteq V$, 
encompasses many important problems in theoretical computer science, operations research, game theory, and more. Recently, SFM has found applications in computer vision, machine learning, and speech recognition~\cite{BVZ01, KKT08, KT10, LB11}. Correspondingly, SFM has been the subject of extensive research for decades and is foundational to the theory of combinatorial optimization.

Throughout the paper, unless specified otherwise, we focus on the {\em strongly-polynomial} regime for the {\em query complexity} of SFM. 
We refer to an SFM algorithm as strongly-polynomial (in terms of query complexity) if the number of evaluation oracle queries it makes is at most a polynomial in $n$ and does not depend on the range of the function. 
After decades of advances \cite{GLS81, C85, GLS88, S00, FI00, IFF01, I03, V03, O09, IO09}, the current state-of-the-art strongly-polynomial algorithms include an $O(n^2 \log n)$-query, $\exp(O(n))$-time algorithm \cite{J21} and an $O(n^3 \log \log n/\log n)$-query, $\poly(n)$-time algorithm \cite{J21}, which improved (in query complexity) upon $\tilde{O}(n^3)$-query, $\tilde{O}(n^4)$-time algorithms of \cite{LSW15, JLSW20, DVZ21}.\footnote{Throughout, we use $\tilde{O}(\cdot)$ to hide polylogarithmic factors.}

Despite the rich history of SFM research, obtaining {\em lower bounds} on the query complexity for SFM has been notoriously difficult. \cite{H08} described two different constructions of submodular functions whose minimization requires $n$-queries to an evaluation oracle; in fact, both can be minimized by querying all the $n$ singletons. Later,~\cite{CLSW17} showed that one of the examples in ~\cite{H08} also needs $n/4$ gradient queries to the Lov\'asz extension of the submodular function. This remained the best lower bound, until recently~\cite{GPRW19} proved a $2n$-query lower bound 
on SFM via a non-trivial construction of a submodular function (which can be minimized in $2n$ queries). For more discussions on difficulties in obtaining super-linear lower bounds, we refer the reader to~\Cref{subsec:related-work}.

More recently, there has been an interest in understanding the {\em parallel complexity} of SFM. Note that any SFM algorithm proceeds by making queries to an evaluation oracle in rounds, and the parallel complexity of  SFM is the minimum number of rounds (also known as the depth) required by any {\em query-efficient} SFM algorithm that makes at most $\poly(n)$ evaluation oracle queries. 
All SFM algorithms described above proceed in $\Omega(n)$-rounds. The best known round-complexity is the algorithm due to~\cite{J21} which runs in $O(n\log n)$ rounds.
On the lower bound side,  
\cite{BS20} proved that any query-efficient SFM algorithm must proceed in $\Omega(\log n/\log\log n)$-rounds. This was improved in~\cite{CCK21} to an $\Omgt(n^{1/3})$-lower bound on the number of rounds for query-efficient SFM. The latter paper also mentioned a bottleneck of $n^{1/3}$ to their approach and left open the question of whether a nearly-linear number of rounds are needed, or whether there is a query-efficient SFM algorithm proceeding in  $n^{1-\delta}$ many rounds for some absolute constant $\delta > 0$.

%%%%%%%%%%%%%%%%%%%%%%%%%%%%%%%%%
% Subsec: Our Results
%%%%%%%%%%%%%%%%%%%%%%%%%%%%%%%%%

\subsection{Our Results.} 

In this paper we provide improved lower bounds for both the query complexity for SFM, and the round complexity for query-efficient parallel SFM. We prove that any deterministic SFM algorithm requires $\Omega(n \log n)$ queries to an evaluation oracle, and that any parallel SFM algorithm making at most $\poly(n)$ queries must proceed in $\Omega(n / \log n)$ rounds.

\begin{restatable}[Query complexity lower bound for deterministic algorithms]{theorem}{OracleLowerBound} \label{thm:oracle_complexity_lb_det}
For any finite set $V$ with $n$ elements and deterministic SFM algorithm $\ALG$, there exists a submodular function $F: 2^V \rightarrow \mathbb{R}$ such that $\ALG$ makes at least $\frac{n}{2} \log_2 (\frac{n}{4})$ evaluation oracle queries to minimize $F$. 
\end{restatable}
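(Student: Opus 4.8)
The plan is to prove the bound by an adversary argument against a recursively defined family of submodular functions, with the quantity $\frac{n}{2}\log_2(\frac n4)$ arising from unrolling a recurrence of the form $Q(n)=2Q(n/2)+\frac n2$ with base case $Q(4)=0$ (equivalently, $\log_2(n/4)$ ``levels'', each contributing a total of $n/2$ forced queries across all sub-instances at that level).

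The first step is the construction. I would define, for each ``hidden structure'' $\theta$, a submodular function $F_\theta : 2^V \to \R$ whose unique minimizer encodes $\theta$, built recursively: split $V$ into equal halves $V = A \sqcup B$, place independent hidden sub-structures on $A$ and on $B$, and glue the two half-instances with an extra term forcing the minimizer of $F_\theta$ to determine the minimizers on both halves \emph{together with} $\Theta(n)$ bits of ``gluing'' information relating $A$ and $B$; the recursion bottoms out on ground sets of constant size ($4$), where no query is needed. Before anything else one must check that every member of the family is genuinely submodular; I would do this by exhibiting each $F_\theta$ in the form $\sum_j \phi_j\!\left(\sum_{e} w^{(j)}_e \, \mathbf{1}[e \in S]\right)$ with each $\phi_j$ concave and each $w^{(j)}$ nonnegative (truncations / tent functions) — a class closed under addition, closed under the recursive gluing step, and known to be submodular.

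Next comes the adversary. Against a deterministic $\ALG$, the adversary maintains the subfamily $\mathcal{F}_t$ of functions consistent with the answers to the first $t$ queries, and on query $S_{t+1}$ returns the value $v$ maximizing a potential $\Phi$ of the surviving family $\{F \in \mathcal{F}_t : F(S_{t+1}) = v\}$. The potential should quantify the remaining uncertainty about the minimizer; guided by the construction, $\Phi$ of a top-level family decomposes as the sum of the potentials of the two half-families plus a count of unresolved gluing bits, so initially $\Phi(\mathcal{F}_0) = Q(n) = \frac n2 \log_2(\frac n4)$ in units of one query. Two facts then close the argument: (i) whenever $\Phi(\mathcal{F}_t) > 0$ there exist $F, G \in \mathcal{F}_t$ with distinct minimizers, so no single set can be certified optimal for all consistent functions and a correct $\ALG$ cannot have halted; and (ii) every query decreases $\Phi$ by at most $1$. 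Hence $\ALG$ makes at least $Q(n) = \frac n2 \log_2(\frac n4)$ queries.

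The main obstacle is (ii): controlling how much a single evaluation query can reveal about $\theta$. A carelessly chosen family would allow one cleverly chosen query to encode many elementary ``comparisons'' at once in its real value and thereby pin down the whole hidden structure in $O(n)$ queries or fewer — precisely the kind of barrier that kept super-linear lower bounds (and, in the parallel setting, bounds beyond $\tilde{\Omega}(n^{1/3})$) out of reach. The construction must therefore be \emph{robust}: for every fixed $S \subseteq V$, the dependence of $F_\theta(S)$ on $\theta$ must factor through only $O(1)$ of the elementary pieces of $\theta$, in a way the adversary can always accommodate by resolving at most one such piece while leaving the rest undetermined. Identifying the right potential and proving this uniform per-query drop of at most $1$ over all $S \subseteq V$ is the technical heart of the proof; the remaining ingredients — submodularity, the ``$\Phi>0$ implies two minimizers'' step, and unrolling the recurrence to $\frac n2 \log_2(\frac n4)$ — are comparatively routine.
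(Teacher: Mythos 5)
Your proposal is a plan rather than a proof, and the two components you yourself identify as essential are exactly the ones left unspecified; moreover the recursion you propose is structurally different from the one the paper uses, and I do not see how to make yours work. Concretely: (1) You never exhibit the family $\{F_\theta\}$. The "extra gluing term forcing the minimizer to determine $\Theta(n)$ bits relating $A$ and $B$" is the entire difficulty --- gluing a term that depends on the hidden structure onto a submodular base generically destroys submodularity, and the paper has to introduce a dedicated correction term (a "submodularizer" $\phi$) together with a careful separation of scales ($f_{A,R}$ at scale $1$, $\phi$ at scale $1/|V|$, the recursive part at scale $1/(M|V|)$) and a multi-page case analysis to recover it. Your suggested certificate --- writing $F_\theta$ as a sum of concave functions of nonnegative modular functions --- is a genuinely submodular class, but there is no reason the functions you need lie in it, and indicator-gated terms of the form $\ind(S_A=R)\cdot g(S_B)$ (which is what "the minimizer must encode $\theta$" tends to force) do not. (2) The claim that every query decreases your potential $\Phi$ by at most $1$ is asserted, not argued, and it is precisely where a balanced recursion $Q(n)=2Q(n/2)+n/2$ breaks: a query $S$ meets both halves, and its single real value can in principle reveal information about the sub-structure on $A$ \emph{and} on $B$ \emph{and} several gluing bits at once, so nothing caps the drop of $\Phi$ at $1$. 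You flag this as "the technical heart," but flagging it is not resolving it.

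For contrast, the paper avoids both problems by using a strictly \emph{sequential} (not balanced) recursion: peel off a block $A$ of size $2$ containing a hidden singleton $R$, and arrange the construction so that the value of \emph{every} query $S$ with $S\cap A\neq R$ is completely independent of the rest of the hidden structure --- so one query can only ever make progress on the first unsolved layer. A halving adversary then forces $\lfloor\log_2 n\rfloor-1$ queries to find a set with $S\cap A=R$, giving the recurrence $h(n)\geq h(n-2)+\lfloor\log_2 n\rfloor-1$, whose solution is $\frac{n}{2}\log_2(\frac{n}{4})$. That your target $\frac{n}{2}\log_2(\frac{n}{4})$ also solves $Q(n)=2Q(n/2)+\frac{n}{2}$ is a numerical coincidence; the information-hiding argument that makes the bound provable is tied to the sequential decomposition, and your divide-and-conquer variant would need a new idea to enforce the per-query cap.
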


\Cref{thm:oracle_complexity_lb_det} constitutes the first super-linear lower bound on the number of evaluation queries for SFM. The previous best lower bound was $2n$, due to \cite{GPRW19}.

\begin{restatable}[Parallel lower bound for randomized algorithms]{theorem}{ParallelLowerBound} \label{thm:adaptive_lb}
For any finite set $V$ with $n$ elements, constant $C \geq 2$, and (possibly randomized) parallel SFM algorithm $\ALG$ that makes at most $Q := n^C$ queries per round, there exists a submodular function $F: 2^V \rightarrow \mathbb{R}$ such that $\ALG$ takes at least $\frac{n}{2C \log_2 n}$ rounds to minimize $F$ with high probability.
\end{restatable}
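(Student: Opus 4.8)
The plan is to derive \Cref{thm:adaptive_lb} by re-analyzing, now against a fixed hard \emph{distribution}, the same family of submodular functions that underlies \Cref{thm:oracle_complexity_lb_det}. By Yao's minimax principle, it suffices to exhibit a distribution $\mathcal{D}$ over submodular functions $F\colon 2^V\to\R$ in our family such that every \emph{deterministic} algorithm that makes at most $Q=n^C$ queries per round and runs for fewer than $R^\star:=\frac{n}{2C\log_2 n}$ rounds fails to return a minimizer of $F$ with probability $1-o(1)$ over $F\sim\mathcal{D}$; averaging over the algorithm's internal randomness then yields the claimed bound for randomized algorithms. I would take $\mathcal{D}$ to be the uniform distribution over the hidden parameter $\theta$ indexing the construction. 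As in the analysis behind \Cref{thm:oracle_complexity_lb_det}, $\theta$ is organized as a \emph{chain} of roughly $n/2$ ``gadgets'': each gadget, in isolation, requires $\log_2(n/4)$ queries to resolve, and gadget $i+1$ can be resolved only after gadgets $1,\dots,i$ are --- this chained dependency is exactly what forces the $\tfrac{n}{2}\log_2(\tfrac{n}{4})$ deterministic lower bound.

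Next I would fix a deterministic $R$-round algorithm $\ALG$, run it against $F_\theta$ with $\theta\sim\mathcal{D}$, and track the \emph{resolved frontier} $K_r=K_r(\theta)\in\{0,1,\dots,n/2\}$, defined as the index of the last gadget whose contribution to the minimizer is pinned down by $\ALG$'s transcript through round $r$ (a deterministic function of $\theta$ since $\ALG$ is deterministic). The crux is the per-round inequality
\[
K_r \;\le\; K_{r-1} + \log_2 Q \;=\; K_{r-1} + C\log_2 n \qquad\text{for every } \theta .
\]
The intuition is that within round $r$ all $Q$ queries are chosen non-adaptively from the round-$(r-1)$ transcript; in our construction, pushing the frontier from $K_{r-1}$ to $K_{r-1}+k$ forces $\ALG$ to have, in effect, correctly anticipated a $k$-bit ``continuation key'' of the chain, and a single batch of $Q$ queries can cover at most $Q$ such possibilities, so $2^k\le Q$, i.e.\ $k\le\log_2 Q$.

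I expect the proof of this per-round bound to be the main obstacle. It requires an adversary that, after seeing each round's batch of $\le Q$ queries, commits to answers while (i) keeping a large sub-family of parameters $\theta'$ consistent with all answers given so far, (ii) ensuring the frontier advances by at most $\log_2 Q$, and (iii) guaranteeing that the committed answers are simultaneously realizable by an actual submodular function in the family --- submodularity being the delicate global consistency constraint that also distinguishes this argument from a generic decision-tree lower bound. Making (i)--(iii) compatible is precisely where the detailed structure of the construction from \Cref{thm:oracle_complexity_lb_det} (rather than an arbitrary hard family) enters, and where the hypothesis $C\ge 2$ is convenient (so that $Q$ comfortably exceeds the per-gadget query cost).

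Finally, iterating the per-round bound gives $K_R\le R\cdot C\log_2 n$, so if $R<R^\star$ then $K_R<n/2$ and at least one gadget is left unresolved by $\ALG$'s final transcript. By design of the family, conditioned on the transcript an unresolved gadget remains (near-)uniform over its $\Theta(n)$ possible settings, each of which yields a distinct minimizer of $F_\theta$; hence $\ALG$ returns $\arg\min F_\theta$ with probability at most $O(1/n)=o(1)$ over $\mathcal{D}$. A union bound over the $RQ=\poly(n)$ total queries ensures the adversary's consistency invariants hold throughout the interaction, and Yao's principle then upgrades the conclusion to the stated $\Omega\!\big(n/(C\log n)\big)$-round lower bound against randomized algorithms that succeed with high probability.
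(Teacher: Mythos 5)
Your high-level framework (Yao's principle plus a layered family in which each round can only push a ``resolved frontier'' forward by a bounded amount) is the right one, and it matches the paper's in spirit. But the step you yourself flag as ``the main obstacle'' --- the per-round inequality $K_r \le K_{r-1} + \log_2 Q$ --- is exactly where the proposal has a genuine gap, and as stated it is false: you assert it ``for every $\theta$,'' yet for any fixed batch of queries there are parameters $\theta$ for which a single lucky query resolves many gadgets at once, so the bound can only hold with high probability over $\theta$. The counting heuristic ``$2^k \le Q$'' is not a proof of even the probabilistic version, because with $r=1$ gadgets a \emph{constant} fraction of queries hit each gadget (a random set $S$ satisfies $S\cap A_i = R_i$ with probability $1/4$; cf.\ the paper's remark that the deterministic adversary argument breaks for randomized algorithms). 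Consequently the transcript genuinely leaks information about partially-resolved layers, the posterior on the unresolved gadgets conditioned on the transcript is \emph{not} uniform, and your closing claim that an unresolved gadget ``remains (near-)uniform'' is precisely what needs proof. You also conflate the adversary argument (appropriate for worst-case deterministic lower bounds) with the distributional argument required by Yao; in the distributional setting there is no adversary choosing answers.

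The idea you are missing is a different parameter setting that makes the hiding step trivial. The paper takes $r = C\log_2 n$, i.e.\ $|A_i| = 2r$ and $|R_i| = r$, so that given $A_i$ there are $\binom{2r}{r} \ge 2^{2r}/(2r+1) \ge n^{2C}/(2C\log_2 n + 1)$ equally likely choices of $R_i$. A union bound over the $Q = n^C$ queries of a round then shows that with probability $1 - O(n^{-C}\log n)$ \emph{no} query $S$ in that round satisfies $S\cap A_i = R_i$; on that event every answer equals $f_{A_i,R_i}(S_{A_i}) + \frac{1}{2|B_i|}\phi(S)$, which is independent of all deeper layers, so the posterior on $(A_{i+1},R_{i+1},\dots)$ is exactly the uniform prior and the recursion is clean. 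The frontier therefore advances by at most $1$ per round (not $\log_2 Q$), and the lower bound comes from the number of layers $\ell = n/2r = n/(2C\log_2 n)$ rather than from $\tfrac{n/2}{\log_2 Q}$. Your $r=1$ route might be completable, but it requires controlling a non-uniform posterior across $\poly(n)$ correlated queries --- a substantially harder argument that the proposal does not supply.
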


\Cref{thm:adaptive_lb} improves upon the previous best $\Omgt(n^{1/3})$ parallel lower bound due to~\cite{CCK21}. Further, \Cref{thm:adaptive_lb} 
is optimal up to logarithmic factors due to \cite{J21}, which
yields an $O(n \log n)$-round, $O(\poly(n))$-queries algorithm.\footnote{This query bound is due to the fact that an algorithm in \cite{J21} solves SFM with $O(n \log n)$ computations of the 
subgradients of the Lov\'asz extension. Further, each computation of a subgradient can be implemented by making $n$ queries to an evaluation oracle for the submodular function in parallel, i.e.\ a single round.}. 

Both \Cref{thm:oracle_complexity_lb_det} and \Cref{thm:adaptive_lb} are obtained by constructing a new family of submodular functions. This family of submodular functions and the analysis of their properties is our main technical contribution. At a high level, 
we glue together simple submodular functions, each of which is defined on a distinct part of a large partition of the ground set $V$ and has a unique minimizer. The main novelty of our construction is an approach to assemble these functions into a layered structure in such a way that any SFM algorithm needs to effectively find the minimizer of one layer before obtaining any information about the functions in later layers. 
This forces any parallel algorithm to have depth equal to the number of parts, which implies our parallel lower bound. We also show that minimizing a single part needs a number of queries super-linear in the size of that part,  implying the super-linear query complexity lower bound for deterministic algorithms. 
More insights into our construction and proofs are given in \Cref{subsec:techniques}.

%%%%%%%%%%%%%%%%%%%%%%%%%%%%%%%%%
% Subsec: Our Techniques
%%%%%%%%%%%%%%%%%%%%%%%%%%%%%%%%%

\subsection{Our Techniques}
\label{subsec:techniques}

Previous works on proving lower bounds for parallel SFM \cite{BS20, CCK21} apply the following
generic framework. At a high level, they design a family of hard submodular functions which are parameterized using a partition $(P_1, \ldots, P_\ell)$ of the ground set.
The key property they show is that even after obtaining answers to polynomially many queries in round $i$, any algorithm (with high probability) doesn't
possess any information about the elements in $P_{i+1}, \ldots, P_\ell$.
Further, the construction also has the property that knowing which elements are in the final part $P_\ell$ is crucial in obtaining the minimizer.
These properties prove an $\ell-1$ lower bound on the number of rounds for parallel SFM.

Our paper also proceeds under the same generic framework, but departs crucially from prior work in the design of the family of hard submodular functions $\mathcal{F}$, which is the main technical innovation of this paper. 
With this new construction, our query complexity lower bound follows by a careful adversarial choice of function $F \in \mathcal{F}$, and our parallel round complexity lower bound follows by choosing a random function uniformly at random from $\mathcal{F}$. 

\medskip
\noindent \textbf{Recap of Previous Constructions.}
Before we dive into a high-level discussion of our construction, here we remind the reader
of the construction ideas in \cite{BS20} and \cite{CCK21}, and why they stop short of proving a nearly-linear lower bound on the number of rounds for parallel SFM.
Both these works construct so-called {\em partition submodular 
functions} $F$ where one is given a partition $(P_1, \ldots, P_\ell)$, 
and the value of $F(S)$ depends only on the {\em cardinality} of the sets $|S\cap P_1|, \ldots, |S\cap P_\ell|$. 
Note that when the algorithm has no information about $P_1, \ldots, P_\ell$, for instance in the first round of querying, then for any query set $S$, 
these cardinalities are roughly proportional to the cardinalities of each part.
The main idea behind the constructions in \cite{CCK21, BS20} is to come up with 
submodular functions where this ``roughly proportional'' property is used to hide any information about the parts $P_2,\ldots, P_\ell$.
However, the fact that $|S\cap P_i|$'s can typically differ by a standard deviation necessarily requires each part $P_i$ to be ``sufficiently large'' and this, in turn,
puts a $o(n)$ bottleneck on the {\em number} of parts $\ell$.
As it stands, it is not clear how to obtain a better than $n^{1/3}$-lower bound on the round complexity of parallel SFM
using partition submodular functions.

Interestingly, a similar approach as above has also been the main tool to prove
lower bounds for parallel convex optimization~\cite{N94,BS18,BJL+19,DG19}.
We defer to \Cref{subsec:related-work} for a more detailed discussion of this broader context.

\medskip \noindent \textbf{Ideas Behind our Construction.} Our construction deviates from the notion of partition submodular functions in that the function value $F(S)$ crucially depends on the {\em identity} of the set $S\cap P_i$ rather than the size, which helps us bypass the bottleneck in previous constructions and obtain nearly-linear lower bound on the number of rounds. 

It is convenient to think of the family of functions we construct
in a recursive fashion.
Pick a subset $A\subseteq V$ of size $2r$, which corresponds to the first part $P_1$ in the partition described above, and denote $B:= V \setminus A$ the remainder parts $P_2 \cup \cdots \cup P_\ell$.  
For notational convenience, we denote $S_A := S \cap A$ and $S_B := S \cap B$ for any set $S \subseteq V$. 
Let $R \subseteq A$ be a subset of size $|R| = r = |A|/2$, and consider the following function $F: 2^V \to \R$ defined as
\begin{equation} 
	F(S) := h_R(S) + \beta \cdot \ind(S_A = R) \cdot g(S_B) \label{eq:meta-def}, \tag{Meta Definition}
\end{equation}
where $\ind(\cdot)$ is the indicator function, and $g$ is a submodular function which will recursively be the same as $F$ defined over the smaller universe $B$. 
The parameter $\beta$ is a small scalar, and should be thought of as $\Theta(\frac{1}{|V|})$.
We aim to design the function $h_R(\cdot)$ to have the following two properties:
\begin{itemize}
\vspace{-0.2cm}
	\item[(P1)] Any set $S\subseteq V$ is a minimizer of $h_R$ if and only if $S_A = R$, 
\vspace{-0.1cm}
	\item[(P2)] The function $F$ defined in \eqref{eq:meta-def} is submodular whenever $g$ is submodular.
\vspace{-0.2cm}
\end{itemize}

We now claim that obtaining such a function $h_R$ suffices to prove an $\frac{n}{2C\log n}$-lower bound on the number of rounds required by any exact parallel SFM algorithm making $\leq n^C$ queries per round. In particular, the subsets $R \subseteq A \subseteq V$ with $|R| = |A|/2 = C \log n$, as well as the recursively defined function $g$, will be chosen uniformly at random. 

To see this, first observe that when $\beta$ is sufficiently small, if $S_g^*$ is a (unique) minimizer of the function $g$, then the set $S^*:= R \cup S_g^*$ is a (unique) minimizer of $F$. This crucially uses property (P1) which says that $R \cup S_B$ is a minimizer of $h_R$ for any $S_B \subseteq B$.
Next, consider the first round of queries $Q^1, \ldots, Q^T$. Since $R \subseteq A$ is chosen uniformly at random, and because $|R| = |A|/2 = C \log n$, the probability that one of these $Q^i_A = R$ is negligible if $T \leq n^C$. 
Therefore, all the answers to the queries in the first round are precisely $h_R(Q_i)$, revealing no information about the function $g$.
On the other hand, the minimizer of $F$ needs to minimize $g$.
Therefore, if we pick $g$ randomly from the same family of $F$ but over the smaller universe $B$, we could apply the above argument recursively with $2C\log n$ fewer elements and one fewer round.
In this way, we prove an $\frac{n}{2C\log n}$-lower bound on the number of rounds needed to exactly minimize the random submodular function $F$.

The big question left, of course, is whether one can construct a function $h_R$ with the properties mentioned above. This is what we discuss next. 

\medskip \noindent \textbf{Obtaining Submodularity.} Let us first discuss an idea which does not work, and then fix it.
One way to define $h_R$ is to take a submodular function $f_R$ defined {\em only} over elements of $A$, whose (unique) minimizer is the subset $R$, and then extend it 
as $h_R(S) := f_R(S_A)$. In particular, 
\begin{equation}\label{eq:try1}
F(S) := f_R(S_ A) + \beta \cdot \ind(S_A = R) \cdot g(S_B) . \tag{First Try}
\end{equation}
\noindent
Note that it satisfies property (P1), i.e. $S$ is a minimizer of $h_R$ if and only if $S_A = R$. 
Unfortunately, 
the resulting function $F$ may not be submodular even if both $f_R$ and $g$ are submodular. 
To see this, consider an element $e\in B$ and consider the marginal increase in $F$ when $e$ is added to a set $S$. 
Since $f_R$ only depends on $S_A$ and $e\in B$, in the marginal calculation of $F(S + e) - F(S)$, the $f_R$ terms cancel out. In particular, we get that 
\[
F(S + e) - F(S) = \beta\cdot \ind(S_A = R) \cdot \left(g(S_B + e) - g(S_B)\right) .
\]
Suppose the parenthesized term is positive for some $S_B$ (e.g. the maximal minimizer of $g$) and 
consider the sets $S := R\cup S_B$ and $S' := R' \cup S_B$, where $R'$ is any strict subset of $R$. In this case $F(S + e) - F(S) > 0$ while $ F(S'+ e) - F(S') = 0$ and
since $S' \subseteq S$, this violates submodularity. \smallskip

To fix the above idea, we pad the function $f_R(S_A)$ with what we call a ``{\em submodularizer function}'' $\phi(S)$. Think of $\phi$ as taking two sets $(S_A, S_B)$ as input;
the first set is a subset of $A$ the other is a subset of $B$. We define $h_R(S) := f_R(S_A) + \phi(S_A, S_ B)$ and therefore, 
\begin{equation}\label{eq:try2}
F(S) := f_R(S_A) + \phi(S_A, S_B) + \beta \cdot \ind(S_A = R) \cdot g(S_B) . \tag{Layered Function}
\end{equation}
What properties do we need from $\phi$? First, since (P1) requires that when $S_A = R$, the set $S$ is a minimizer of $f+\phi$ irrespective 
of what $S_B$ is, this suggests $\phi(R, S_B)$ is the same for any $S_B \subseteq B$. For simplicity, assume this is $0$.
That is, when $S_A = R$, the $\phi$ function doesn't have any effect.
However, considering the reason our first attempt failed, when $S'_A$ is a {\em strict} subset of $R$, then $\phi(S'_A, S_B)$ should be so defined such that adding an element $e \in B$ to $S_B$ {\em strictly increases} the function value. 
This would make sure that $F(S' + e) - F(S') > 0$ for the violating example in the previous paragraph. Not only that, this strict increase should be {\em greater}
than the increase in $F(S + e) - F(S)$, where $S = (R,S_B)$ is as in the previous paragraph, and this increase is $\beta$ times some marginal of $g$. 
To ensure that this occurs, we choose $\beta$ to be ``small enough''; it suffices to choose a constant factor less than the strict
increase of the function $\phi$. A similar argument also leads us to the conclusion that when $S_A$ is a {\em strict superset} of $R$, then $\phi(S_A, S_B)$ should {\em strictly decrease} in value when an element is added to $S_B$. A definition of $\phi$ that works is the following:
\begin{equation}\label{eq:def-phi}\tag{Submodularizer}
\phi(S_A, S_B) := \begin{cases}
+4\beta |S_B| & \textrm{if $S_A$ strict subset of $R$} \\
-4\beta |S_B| & \textrm{if $S_A$ strict superset of $R$} \\
0 & \textrm{otherwise, and in particular if $S_A = R$}
\end{cases} 
\end{equation}
Note we still have the parameter $\beta$ unspecified, and we set it soon. 

The above discussion only considered marginals of an element $e\in B$ to the function $F$. One also needs to be careful about the case when the element $e \in A$.
This will put a restriction on what $f_R$ and $\beta$ are, and will form the last part of our informal description. 

Consider an element $e\in A\setminus R$ and consider the function $\phi(R, S_B)$ for an arbitrary $S_B \subseteq B$. Note that, as defined, the value of $\phi(R, S_B) = 0$
and $\phi(R+e, S_B) = - 4\beta |S_B|$. That is, adding $e$ to $R\cup S_B$ can {\em decrease} the $\phi$ function value by $-4\beta |S_B|$. On the other hand, adding $e$ to $(A-e)\cup S_B$ 
doesn't change the $\phi$-value. Indeed, $\phi(A, S_B) = \phi(A - e, S_B) = -4\beta |S_B|$ since both $A$ and $A-e$ are strict supersets of $R$ (remember $e\notin R$). 
In short, the function $\phi$ is {\em not} submodular and this endangers the submodularity of  the sum function $h_R = f_R + \phi$.

To fix this, we {\em make sure} that the function $f_R$ has a ``large gap'' between $f_R(R + e)$ and $f_R(R)$. In particular, we ensure that $f_R(R+e) - f_R(R) = \Omega(1)$
while $\beta = O(1/n)$. In this way, although adding $e \in A \setminus R$ to $(R, S_B)$ can decrease the $\phi$ value by $-4 \beta |S_B|$, since $\beta = O(1/n)$ this decrease is smaller than the increase caused by $f_R(R+e) - f_R(R)$ when the constants are properly chosen. In particular, we define the function $f_R$ on the universe $A$ as follows
\begin{equation}\label{eq:def-fR}
f_R(S_A) := \begin{cases}
0 & \text{if $S_A = R$}\\
1 & \text{if $S_A$ is a strict superset or a strict subset of $R$}\\
2 & \text{otherwise} \\
\end{cases}
\end{equation}
It is not too hard to see that this function $f_R$ is submodular; in fact, this function (or a scaled version if it) has been considered before in the submodular function literature~\cite{H08, CLSW17}.
This completes the informal description and motivation of our construction of hard functions;
a formal presentation of our construction and the full proof of its properties can be found in \Cref{sec:construction} and \Cref{sec:missing_proofs}.

\medskip \noindent \textbf{Query Complexity Lower Bound.} 
While discussed and motivated in terms of the number of parallel rounds for SFM, our construction can also prove an $\Omega(n\log n)$ lower bound on the {\em query complexity} of any {\em deterministic} SFM algorithm.
Indeed, for this part, we consider the family where the size of $|A| = 2$, and $R$ is a singleton among these two elements.
Instead of selecting a random function from this family, we adversarially choose a worst-case function depending on the deterministic algorithm. 
Note that the function definition above doesn't require the size $|A|$ to be large; we made it large in the previous discussion since we were ruling out polynomial query parallel algorithms.

The main observation is the strong property that until the algorithm queries a set $S$ with $S_A = R$, it obtains no information about the function $g$.
Therefore, if we can prove a lower bound $L(n,r)$ on the number of oracle queries any algorithm needs to find such a set, with $r$ being the size of $R$, then we can obtain an $\Omega(\frac{n}{r}\cdot L(n,r))$ lower bound  on the exact SFM query complexity. 

It is actually not too hard to prove $L(n,2) \geq \lfloor \log_2 n \rfloor - 1$ for any {\em deterministic} algorithm. Note that $R$ is a singleton element, and we overload notation and call that element $R$ as well.
First, note that for any query $S$, if $S_A \neq R$, then 
the value of $F(S)$ only reveals whether $S$ contains ``both'' the elements of $A$, ``none'' of the elements of $A$, or the ``other'' element in $A$ that is not $R$; in the first case, the $\phi$-function is negative, the second case it is positive and the last case it is $0$. The lower bound can now be proved using an {\em adversary} argument against the deterministic algorithm, by choosing the function so that the oracle never answers ``other.'' Since the algorithm is deterministic, the adversary can choose the set $A$ depending on the queries. The adversary maintains an ``active universe'' $U$ which initially contains all the elements.
If the first query $S$ contains $\leq |U|/2$ active elements, then the adversary puts both elements of $A$  in $V \setminus S$, answers ``none'', and removes $U\cap S$ from $U$; if $S$ contains $>|U|/2$ active elements then the adversary puts both elements in $S$, answers ``both'', and removes $U\setminus S$ from $U$. The algorithm can never reach the desired set until the number of active elements goes below $2$. Since the number of active elements can at best be halved each time, this proves a $\log_2 n - 1$ lower bound on the number of queries.
Together with our construction, we obtain an $\Omega(n\log n)$ lower bound on the query complexity of any deterministic SFM algorithm. This is the first super-linear lower bound for this question.

\medskip \noindent \textbf{Limitations and Open Questions.} We end this overview section by pointing out some limitations of our construction; we believe bypassing them would require new ideas.
The first issue is the {\em range} of our submodular functions. Our current way of constructing the submodularizer $\phi$ in \eqref{eq:def-phi} requires that the range of $\phi$ be distinctly smaller than the marginal increase in the $f_R$ function. This is noted by the parameter $\beta$ which is set to $\Theta(1/n)$. If there are $\ell = n/2r$ parts to the function, then due to the recursive nature of our construction, the smallest non-zero value our function takes is as small as $O(\frac{1}{n^\ell})$. 
When $\ell = \Theta(n/\log n)$, as is the case in our lower bound for parallel SFM, this is $2^{-\Theta(n)}$. Put differently, if we scale the function such that the range is integers, then our function's range takes exponentially large integer values. Therefore, our lower bounds are more properly interpreted in the {\em strongly polynomial} regime where the round/query-complexity needs to be independent of the 
range of the submodular function.
In contrast, the submodular functions constructed in~\cite{CCK21} which proves an $\Omgt(n^{1/3})$ lower bound on the number of rounds have range $\{-n, -n+1, \ldots, n-1, n\}$, 
and thus also constitute a lower bound in the {\em weakly polynomial} regime (its definition is deferred to \Cref{subsec:related-work}).
Interestingly, the lower bound construction in~\cite{BS20} also has a large range; it remains an interesting open problem to prove a nearly-linear lower bound on the number of rounds for query-efficient parallel SFM for integer-valued submodular functions with $\poly(n)$-bounded range. 

We prove an $\Omega(n\log n)$ lower bound for the query complexity of deterministic algorithms for SFM. Improving this to an $n^{1+c}$-lower bound for some constant $c > 0$ is an important open question.
The collection of functions we construct can be minimized in $\tilde{O}(n)$ queries, and so one may need new ideas to obtain a truly super-linear lower bound.
The main idea behind this algorithm is that in \eqref{eq:try2}, an element of $R$ can be recognized in $\polylog(n)$ queries using a binary-search style idea. 
Basically, given any set $S$ the function value $F(S)$ gives the information whether $S_A$ is a subset/superset of $R$ (in which case it also gives the size $|S_A|$), or it tells if $S_A$ is neither a subset or superset of $R$. With some work this leads to an $\tilde{O}(r)$ query algorithm to find $R$ (here $r$ is the size of $R$), and thus in $n/2r$ rounds with a total query complexity of $\tilde{O}(n)$ one minimizes $F$.

A final limitation is that we fall short of proving an $\Omega(n\log n)$ query lower bound for {\em randomized} SFM algorithms. Indeed, if one looks at the structure of our $\Omega(n\log n)$ proof, 
the ``$\log n$'' arises from $L(n,2)$ which is a lower bound on the number of queries a deterministic algorithm needs to make to find a set $S$ such that $S_A = R$.
With randomization, this problem is trivially solved in $O(1)$ queries; a random set that contains each element with probability $1/2$ would do. One may wonder if $r = |R|$ was increased, whether a super-linear in $r$ lower bound could be proved for $L(n,r)$. Unfortunately this is not possible; there is a randomized algorithm 
which finds a set $S$ with $S_A = R$ in expected $O(r)$ queries. 
We leave proving a super-linear lower bound on the query complexity of randomized algorithms for SFM as an open question. The family we construct is a potential candidate for the lower bound, just that a new technique would be needed to show this.

%%%%%%%%%%%%%%%%%%%%%%%%%%%%%%%%%
% Subsec: Further Related Work
%%%%%%%%%%%%%%%%%%%%%%%%%%%%%%%%%

\subsection{Further Related Work}\label{subsec:related-work}

\noindent \textbf{Other Regimes for SFM.}
Apart from the strongly-polynomial regime, 
there have also been multiple recent improvements to the complexity of SFM in other regimes that depend on $M$, the range of the function, i.e.\ $\max_{S \subseteq V} |f(S)|$ when $f$ is scaled to have an integer range. 
In particular, we refer to an algorithm as {\em weakly-polynomial} if the number of evaluation oracle queries it makes is polynomial in $n$ and $\log M$, and {\em pseudo-polynomial} if the number of queries is a polynomial in $n$ and $M$. 
State-of-the-art weakly-polynomial algorithms include $\tilde{O}(n^2\log M)$-query, $O(n^3 \cdot \poly(n, M))$-time algorithms \cite{LSW15,JLSW20}, and state-of-the-art pseudo-polynomial algorithms include $\tilde{O}(n \cdot \poly(M))$-query, $\tilde{O}(n \cdot \poly(M))$-time algorithms~\cite{CLSW17,ALS20}.

\medskip \noindent \textbf{Query Lower Bounds and Cuts.}
As far as the query complexity of SFM is concerned, lower bounds have been stagnating at $\Omega(n)$. The first known lower bound, of $n$ queries, is due to \cite{H08}. Motivated the problem of improving the lower bound, \cite{RSW17} considered graph cut functions, which is a subclass of submodular functions, and the problem of computing a global minimum cut in a graph using cut queries. However, they instead showed an upper bound of $\Tilde{O}(n)$ queries to find a (non-trivial) global minimum cut in an undirected, unweighted graph. 
\cite{GPRW19} improve the lower bound for SFM to $2n$ using an adversarial input technique, and also introduce a novel concept, called the graph cut dimension, for proving lower bounds for the min-cut settings. The main insight is that the cut dimension of a graph, defined as the dimension of the span of all vectors representing minimum cuts (binary vectors in $R^{E}$), is a lower bound on the number of cut queries needed. However, \cite{LLSZ20} has shown that the cut dimension of an unweighted graph is at most $2n - 3$, essentially eliminating the hope for a super-linear lower bound using this measure. Further, the recent work of \cite{AEG+22} provides a randomized algorithm that makes $O(n)$ queries and computes the global minimum cut in an undirected, unweighted graph with probability $2/3$.

\medskip \noindent \textbf{Parallel Convex Optimization.}
As far as parallel lower bounds are concerned, the general framework described in~\Cref{subsec:techniques} and employed in \cite{BS20, CCK21} is similar in spirit to 
the approach taken in~\cite{N94} to bound parallel non-smooth {\em convex} optimization. More precisely, \cite{N94} considers the problem of minimizing a non-smooth convex function $f$ (rescaled to be have range $[-1,+1]$) up to $\eps$-additive error in an $\ell_\infty$-ball, where one has access to first-order oracle and can make $\poly(n)$ queries to it in each round.
\cite{N94} 
shows that any query-efficient algorithm with parallel depth $\Ot(n^c \log(1/\eps))$ must have $c \geq 1/3$.

The proof relies on the idea of partitioning 
the universe $V$ into $r = \Omgt(n^{1/3} \log (1/\epsilon))$ parts, and considering functions $f$ that are the maximum of functions $f_i$ defined on these partitions.

\cite{BJL+19} uses a similar framework to 
show that any query-efficient algorithm achieving parallel depth $\Ot(n^c \log(1/\eps))$ must have $c \geq 1/2$.
\cite{N94} hypothesises that 
 such algorithms must have $c\geq 1$, but this is still open. 
The problem has also been studied~\cite{DBW12,BS18,DG19,BJL+19} when the dependence on $1/\eps$ is allowed to be a polynomial, and we refer the interested reader to these works for more details.

\medskip \noindent \textbf{Approximate SFM.}
Since the Lov\'asz extension of a submodular function is a non-smooth convex function, the discussion in the above paragraph is related to understanding the 
parallel complexity of $\eps$-{\em approximate} SFM. In this problem, we assume by scaling that the range of the function is in $[-1,+1]$ and the objective
is to obtain an additive $\eps$-approximation to the minimum value. 
The construction in~\cite{CCK21} 
shows that any query-efficient $\eps$-approximate SFM algorithm 
with  depth $\Ot(n^c\log(1/\eps))$ must have $c \geq 1/3$.
Note the similarity with the lower bound in~\cite{N94} mentioned in the previous paragraph; this is not an accident since the bottlenecks due to standard deviation considerations are similar in both approaches.
A reader may wonder if the constructions in our paper 
also prove that any  query-efficient $\eps$-approximate SFM algorithm with  depth $\Ot(n^c\log(1/\eps))$ must have $c \geq 1$. 
This is not the case; the functions we consider can be $\eps$-approximated in $O(\log(1/\eps))$-rounds. This stems from the limitation in our construction that the ``scale'' of the functions we consider across the layers decay geometrically, and thus one can get $\eps$-close in $O(\log(1/\eps))$-rounds.

The $\eps$-approximate SFM question is also interesting when the dependence of the depth on $1/\eps$ is allowed to be a polynomial. 
In this setting, one can leverage the parallel convex optimization works mentioned in the previous paragraph to obtain query-efficient $\eps$-approximate SFM algorithms with depth being
truly sub-linear in $n$. 
For instance, the algorithm in \cite{BJL+19} implies a query-efficient $\eps$-approximate SFM algorithm running in $\Ot(n^{2/3}\eps^{-2/3})$-rounds. 
On the other hand, the construction in~\cite{CCK21} shows that any query-efficient $\eps$-approximate SFM algorithm with depth 
$(1/\eps)^c$ must have $c\geq 1$.
Understanding the correct answer for query-efficient $\eps$-approximate SFM, both when the dependence on $\eps$ is $\poly(1/\eps)$ and when it is $\log(1/\eps)$, is an interesting open question.

\section{Preliminaries}

Throughout, $\log$ denotes logarithm with base $2$. 
For any two sets $X$ and $Y$, we use $X \subseteq Y$ to denote that $X$ is a subset of $Y$ with possibly $X = Y$; we use $X \subsetneq Y$ to denote that $X$ is a strict subset of $Y$, i.e. $X \subseteq Y$ and there exists at least one element $e \in Y$ such that $e \notin X$. Further, supersets, $\supseteq$, and strict supersets, $\supsetneq$, are defined analogously. 

For any set $X$ and element $e \notin X$, we let $X + e$ denote the set obtained by including $e$ into $X$, i.e. $X \cup \{e\}$.
Given two sets $X$ and $Y$, we define $Y \setminus X = \{e \in Y: e \notin X\}$ to denote the set of elements in $Y$ but not in $X$.

\begin{definition}[Marginals] \label{defn:marginals}
Let $f: 2^V \rightarrow \mathbb{R}$ for finite set $V$. For any $X \subsetneq V$ and $e \in V \setminus X$, we define $\partial_e f(X) := f(X + e) - f(X)$, the marginal of $f$ at $X$ when adding element $e$. 
\end{definition}

\begin{definition}[Submodular functions] \label{defn:submodular_functions} 
A set function $f: 2^V \rightarrow \mathbb{R}$ for finite set $V$ is \emph{submodular} if $\partial_e f(Y) \leq \partial_e f(X)$, for any subsets $X \subseteq Y \subsetneq V$ and $e \in [n] \setminus Y$. An alternative definition is that for any two subsets $X, Y \subseteq V$, the following inequality holds
\begin{equation}\label{eq:defsbm}
    f(X) + f(Y) \geq f(X \cup Y) + f(X \cap Y) .
  \end{equation}
\end{definition}

%%%%%%%%%%%%%%%%%%%%%%%%%%%%%%%%%
% Sec: Our Construction
%%%%%%%%%%%%%%%%%%%%%%%%%%%%%%%%%

\section{Our Construction}
\label{sec:construction}

In this section, we describe our recursive construction of the family of non-negative functions $\mathcal{F}_r(V)$ on subsets of a given set of elements $V$, where $r \in \mathbb{Z}_+$ is an integer such that $2r$ divides $|V|$.  
We prove that any function $F \in \mathcal{F}_r(V)$ is submodular and its unique minimizer takes a special partition structure which is crucial to our proofs of lower bounds in \Cref{sec:lower_bounds}. 

We define the main building block behind our construction in \Cref{subsec:building_block}, and use it to recursively construct the function family $\mathcal{F}_r(V)$ in \Cref{subsec:function_family}.

%%%%%%%%%%%%%%%%%%%%%%%%%%%%%%%%%
% Subsec: Building Block
%%%%%%%%%%%%%%%%%%%%%%%%%%%%%%%%%

\subsection{Main Building Block}
\label{subsec:building_block}

We start by describing the main building block for our construction, which relies on two components. 
The first component is a standard submodular function corresponding to the sum of the rank functions of two rank-$1$ matroids \cite{H08,CLSW17}. 
The second component is a ``submodularizer'' function $\phi$. Despite not being submodular itself, this submodularizer function guarantees the submodularity of our main building block function.

%%%%%%%%%%%%%%%%%%%%%%%%%%%%%%%%%
% Paragraph: Component I: Intersection of Rank-1 Matroids
%%%%%%%%%%%%%%%%%%%%%%%%%%%%%%%%%
\medskip
\noindent \textbf{Component I: Sum of Two Rank-1 Matroids.} 
For any sets $R \subseteq A$,
we define the function $f_{A,R}: 2^{A} \rightarrow \mathbb{R}$ as
\begin{align}
f_{A, R}(S) := \begin{cases}
0 \qquad &\text{if $S = R$,} \\
1 \qquad &\text{if $S \subsetneq R$ or  $S \supsetneq R$,} \\
2 \qquad &\text{otherwise}.
\end{cases} \label{eq:compI} 
\end{align}

As noted in \cite{H08}, the function $f_{A,R}$ above corresponds to the matroid intersection of two rank-$1$ matroids, and is therefore submodular.

\begin{lemma}[\cite{H08}] \label{lem:submodularity_f}
For any $R \subseteq A$, the function $f_{A,R}: 2^A \rightarrow \mathbb{R}$ defined above is submodular.
\end{lemma}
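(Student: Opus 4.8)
The plan is to verify submodularity directly from the diminishing-returns definition: for every $S \subseteq T \subsetneq A$ and every $e \in A \setminus T$, show $\partial_e f_{A,R}(T) \le \partial_e f_{A,R}(S)$. Since $f_{A,R}$ takes only the values $0,1,2$, every marginal lies in $\{-2,-1,0,1,2\}$, so the inequality can only fail if a marginal at the larger set $T$ strictly exceeds a marginal at the smaller set $S$. I would organize the case analysis according to the position of $e$ relative to $R$, namely whether $e \in R$ or $e \notin R$, because $f_{A,R}$ is governed entirely by the containment relationship between the argument and $R$.

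First, suppose $e \notin R$. I claim $\partial_e f_{A,R}(X) \ge 0$ for every $X$, i.e. adding an element outside $R$ never decreases the function. Indeed, adding $e \notin R$ to $X$ cannot turn a non-$R$ set into $R$, and it can only destroy the property "$X \subsetneq R$" (moving the value from $1$ to $2$) or preserve/worsen the category; one checks case by case that $f_{A,R}(X+e) \ge f_{A,R}(X)$. Dually, I would show $\partial_e f_{A,R}(X)$ is non-increasing in $X$: the marginal is positive ($=1$) only when $X \subsetneq R$ and $X + e \not\subseteq R$, zero in the remaining cases; since "$X \subsetneq R$" is a downward-closed condition on $X$ (for $X$ ranging over sets not containing $e$), once it fails at $S$ it fails at the larger $T$, so $\partial_e f_{A,R}(T) \le \partial_e f_{A,R}(S)$. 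Second, suppose $e \in R$. By a symmetric argument, adding an element of $R$ never increases $f_{A,R}$ by more than... more precisely, the marginal $\partial_e f_{A,R}(X)$ equals $-1$ when $X \supseteq R \setminus \{e\}$ with the appropriate relation (so that $X$ is a proper superset region and $X+e$ moves toward $R$ or equals $R$), and the condition under which the marginal is most negative is an upward-closed condition on $X$; so again the marginal can only decrease as $X$ grows. Assembling the at most a handful of subcases in each of the two regimes yields $\partial_e f_{A,R}(T) \le \partial_e f_{A,R}(S)$ in all cases.

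Alternatively — and this is the cleaner route I would actually present — I would invoke the structural description already given in the excerpt: $f_{A,R}$ is the rank function of the matroid obtained by intersecting two rank-$1$ matroids on ground set $A$, one whose independent sets are the subsets of $R$ (contracted appropriately) and one whose independent sets are the subsets of $A \setminus R$; concretely, $f_{A,R}(S) = \min\{1, |R \setminus S|\}\cdot(\text{indicator that }S \not\subseteq R)$-type formula can be rewritten as $f_{A,R}(S) = \mathbf{1}(S \cap (A\setminus R) \neq \emptyset) + \mathbf{1}(R \setminus S \neq \emptyset)$ after checking it agrees with \eqref{eq:compI} on all three cases. Each of the two indicator terms is a rank function of a rank-$1$ matroid (namely $S \mapsto \min\{1,|S \cap (A \setminus R)|\}$ and $S \mapsto \min\{1, |R \setminus S|\} = 1 - \mathbf{1}(R \subseteq S)$, the latter being a translate of a rank function up to an additive constant which does not affect submodularity), hence submodular; and submodularity is preserved under addition. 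This gives the result immediately, citing \cite{H08}.

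I expect the main obstacle to be purely organizational rather than mathematical: making sure the case split is genuinely exhaustive and that the "downward-closed / upward-closed" monotonicity claims about the conditions "$X \subsetneq R$" and "$X \supsetneq R$" are stated correctly when $X$ ranges only over sets avoiding $e$ (resp. containing $e$), since it is easy to miss the boundary cases where $X + e = R$ exactly or where $X$ is incomparable to $R$. The decomposition route sidesteps this entirely, so I would lead with it and relegate the direct case check to a remark or omit it, given that the lemma is attributed to prior work.
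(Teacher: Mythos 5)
Your preferred (second) route is exactly the paper's: the paper gives no proof of this lemma, simply noting that $f_{A,R}$ is the sum of rank functions of two rank-$1$ matroids and citing \cite{H08}, and your identity $f_{A,R}(S)=\mathbf{1}(S\cap(A\setminus R)\neq\emptyset)+\mathbf{1}(R\setminus S\neq\emptyset)$ makes that decomposition explicit and is easily verified on all three cases of \eqref{eq:compI}. The only nit is that the second term is a rank-$1$ matroid rank function precomposed with complementation, $S\mapsto r(A\setminus S)$ (an operation that preserves submodularity), rather than a ``translate of a rank function by an additive constant''; with that understood, both your decomposition argument and your backup direct marginal check (where the positive-marginal condition for $e\notin R$ is really $X\subseteq R$, including $X=R$, still downward closed) are sound.
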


In fact, the submodular function $f_{A,R}$ (appropriately scaled) has previously been used in \cite{H08} to prove an $n$ lower bound on the number of evaluation oracle calls, and in \cite{CLSW17} to show an $n/4$ lower bound on the number of sub-gradients of the Lov\'asz extension for SFM.

%%%%%%%%%%%%%%%%%%%%%%%%%%%%%%%%%
% Paragraph: Component II: The Submodularizer
%%%%%%%%%%%%%%%%%%%%%%%%%%%%%%%%%

\medskip
\noindent \textbf{Component II: The Submodularizer.}
Let $R \subseteq A \subseteq V$ be subsets of the ground set $V$, and denote $B := V \setminus A$. For any subset $S \subseteq V$, we denote $S_A := S \cap A$ and $S_B := S \cap B$. 

Ideally, we would like to recursively define a function on $V$ to be of the form $f_{A,R}(S_A) + \bone(S_A = R) \cdot g(S_B)$, where $g: 2^B \rightarrow \R$ is a submodular function on $B$. However, as mentioned in \Cref{subsec:techniques}, such a function may not be submodular even when both $f_{R,A}$ and $g$ are submodular.
For our recursive construction to go through,  we define the following submodularizer function:
$\phi_{V,A,R}: 2^V \rightarrow \mathbb{R}$ as
\begin{align}
\phi_{V,A,R}(S) := 
\begin{cases}
|S_B| \qquad &\text{if $S_A \subsetneq R$,} \\
-|S_B| \qquad &\text{if $S_A \supsetneq R$,} \\
0 \qquad &\text{otherwise, and in particular when $S_A = R$}.
\end{cases} \label{eq:compII}
\end{align}

Note that the function $\phi_{V,A,R}$ defined above is not submodular, as witnessed by the following violation of the marginal property in \Cref{defn:submodular_functions}. 
To see this, let $X \subseteq Y \subseteq V$ be any two subsets such that 
$X_A = R$, $A \neq Y_A \supsetneq X_A$, and $X_B \neq \emptyset$. Note that $Y_A$ is a strict superset of $X_A$.
Pick an element $e \in A \setminus Y_A$. 
Then observe that $\partial_e \phi_{V,A,R}(X) = - |X_B| < 0$ since $\phi_{V,A,R}(X\cup e) = -|X_B|$ and $\phi_{V,A,R}(X) = 0$.
On the other hand, both $\phi_{V,A,R}(Y\cup e) = \phi_{V,A,R}(Y) = -|Y_B|$ implying
$\partial_e \phi_{V,A,R}(Y) = 0 > \partial_e \phi_{V,A,R}(X)$. This is a violation of submodularity.
However, these are the only cases where submodularity is violated, and  it turns out that this ``almost submodularity'' property helps to guarantee the submodularity of our main building block which we define next. 

\medskip
\noindent \textbf{The main building block.} 
Let $R \subseteq A \subseteq V$ be non-empty subsets of a finite set $V$ and denote $B := V \setminus A$. 
Let $g: 2^B \rightarrow \mathbb{R}$ be a set function on $B$ and $M \geq 0$ be a parameter such that $\max_{S \subseteq B} |g(S)| \leq M$. 
Our main building block is the function $F^{M,g}_{V,A,R}: 2^V \rightarrow \mathbb{R}$ defined as 
\begin{align} \label{eq:building_block}
F^{M,g}_{V,A,R}(S) := f_{A,R}(S \cap A) + \frac{1}{2|V|} \cdot \phi_{V,A,R}(S) + \frac{1}{4M|V|} \cdot \ind(S_A = R) \cdot g(S \cap B) .
\end{align}
The function $F^{M,g}_{V,A,R}$ will be used in \Cref{subsec:function_family} to construct a function family on $V$ by choosing $g$ from the function family recursively defined on $B$. 
To show the submodularity and structural properties of minimizers of this recursive constructed function family, we first prove the following properties of the function $F^{M,g}_{V,A,R}$.

\begin{restatable}[Properties of main building block]{lemma}{PropertyBuildingBlock} \label{lem:properties_building_block}
Let $V$ be a finite set of elements, $R \subseteq A \subseteq V$ be non-empty subsets of $V$, and denote $B := V \setminus A$. Let $g: 2^B \rightarrow \mathbb{R}$ be a submodular function taking values in $[0,M]$ that has a unique minimizer $S_g^* \subseteq B$. Then the function $F := F^{M,g}_{V,A,R}$ defined in \eqref{eq:building_block} satisfies the following properties:
\begin{enumerate}
    \item (Non-negativity and boundedness) For any subset $S \subseteq V$, we have $F(S) \in [0,2]$, 
    \item (Unique Minimizer) $F$ has a unique minimizer $R \cup S_g^*$, 
    \item (Submodularity) $F$ is submodular. 
\end{enumerate}
\end{restatable}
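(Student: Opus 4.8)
The plan is to prove the three properties in the stated order: the boundedness estimates that establish Property~1 feed directly into Property~2, while Property~3 (submodularity) is essentially self-contained and is the hard part.

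\textbf{Properties 1 and 2.} Both follow from a single case analysis on the relation between $S_A := S \cap A$ and $R$. Since $A \neq \emptyset$ we have $0 \le |S_B| \le |B| \le |V|-1$, and by hypothesis $g(S_B) \in [0,M]$. Substituting the definitions of $f_{A,R}$ and $\phi_{V,A,R}$ into \eqref{eq:building_block}, one checks directly that: if $S_A = R$ then $F(S) = \tfrac{1}{4M|V|} g(S_B) \in [0,\tfrac14]$; if $S_A \subsetneq R$ then $F(S) = 1 + \tfrac{|S_B|}{2|V|} \in [1,\tfrac32]$; if $S_A \supsetneq R$ then $F(S) = 1 - \tfrac{|S_B|}{2|V|} \in [\tfrac12,1]$; and otherwise $F(S) = 2$. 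Hence $F(S) \in [0,2]$ for every $S$ (Property~1), and moreover $F(S) \ge \tfrac12$ whenever $S_A \neq R$. Since $(R \cup S_g^*)\cap A = R$ and $(R \cup S_g^*)\cap B = S_g^*$, the value $F(R\cup S_g^*) = \tfrac{1}{4M|V|} g(S_g^*) < \tfrac12$; therefore every minimizer $S$ of $F$ must satisfy $S_A = R$, and among such sets $F(S) = \tfrac{1}{4M|V|} g(S_B)$ is uniquely minimized at $S_B = S_g^*$ because $g$ has a unique minimizer. This proves Property~2.

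\textbf{Property 3.} I would verify the marginal characterization of submodularity from \Cref{defn:submodular_functions}: $\partial_e F(X) \ge \partial_e F(Y)$ for all $X \subseteq Y \subsetneq V$ and $e \in V \setminus Y$, splitting on whether $e \in B$ or $e \in A$. If $e \in B$, then $(X+e)\cap A = X_A$ and $(Y+e)\cap A = Y_A$, so the $f_{A,R}$ terms cancel from both marginals; the $\phi$-term changes by $\partial_e \phi_{V,A,R}(X) \in \{-1,0,+1\}$, determined solely by whether $X_A$ is a strict subset of, strict superset of, or incomparable to $R$ (and likewise for $Y$), while the $g$-term changes by $\ind(X_A = R)\,\partial_e g(X_B)$. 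Running through the sub-cases for the pair (relation of $X_A$ to $R$, relation of $Y_A$ to $R$) consistent with $X_A \subseteq Y_A$: the only sub-case in which both marginals carry a nonzero $g$-term is $X_A = Y_A = R$, and there the inequality reduces exactly to $\partial_e g(X_B) \ge \partial_e g(Y_B)$, which holds by submodularity of $g$; in every other sub-case one has $\partial_e \phi_{V,A,R}(X) \ge \partial_e \phi_{V,A,R}(Y)$ and the residual $g$-correction is harmless because $\tfrac{1}{4M|V|}|\partial_e g| \le \tfrac{1}{4|V|} < \tfrac{1}{2|V|}$, i.e.\ the scale of the $g$-term is strictly smaller than the gaps between the distinct values of $\tfrac{1}{2|V|}\phi_{V,A,R}$.

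If $e \in A$, then $(X+e)\cap B = X_B$ and $(Y+e)\cap B = Y_B$ are unchanged, so the $\phi$- and $g$-terms change only through the change of $S_A$ to $S_A+e$, and the $f_{A,R}$-term is the dominant contribution: $f_{A,R}$ is submodular (\Cref{lem:submodularity_f}), its marginals are integers, while the $\phi$- and $g$-terms alter $F$ by at most $\tfrac{|S_B|}{2|V|} \le \tfrac12$ and $\tfrac{M}{4M|V|} = \tfrac{1}{4|V|}$ in absolute value. So it suffices to control the sub-cases where the two $f_{A,R}$-marginals coincide; enumerating these by the relation of $X_A$ and of $Y_A$ to $R$ (using $X_A \subseteq Y_A$ and $e \notin Y_A$), the critical configuration is precisely the one where $\phi_{V,A,R}$ itself fails submodularity, namely $X_A = R$, $Y_A \supsetneq R$, $e \in A \setminus R$: there $\partial_e F(Y) = 0$ while $\partial_e F(X) = 1 - \tfrac{|X_B|}{2|V|} - \tfrac{g(X_B)}{4M|V|} \ge 1 - \tfrac12 - \tfrac14 = \tfrac14 > 0$, so the unit jump of $f_{A,R}$ at $R$ absorbs the non-submodularity of $\phi_{V,A,R}$. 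Every remaining sub-case reduces either to equal $f_{A,R}$-marginals with $\partial_e \phi_{V,A,R}(X) \ge \partial_e \phi_{V,A,R}(Y)$ up to a negligible $g$-correction (again using $|X_B| \le |Y_B|$, $g\ge 0$, and $\tfrac{1}{2|V|} > \tfrac{1}{4|V|}$), or to a strict drop of at least $1$ in the $f_{A,R}$-marginal that dominates the $O(1/|V|)$-sized corrections.

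\textbf{Main obstacle.} The only delicate part is the $e \in A$ case of Property~3: one must check that in every configuration where $\phi_{V,A,R}$ alone violates submodularity the accompanying jump of $f_{A,R}$ more than compensates, and, conversely, that the genuine submodular drops of $f_{A,R}$ are never overturned by the small $\phi$- and $g$-corrections. This is exactly where the coefficients $\tfrac{1}{2|V|}$ and $\tfrac{1}{4M|V|}$ together with the bound $|S_B| \le |V|$ in \eqref{eq:building_block} are used; the remaining sub-cases are routine bookkeeping.
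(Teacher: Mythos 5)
Your treatment of Properties 1 and 2 is essentially identical to the paper's: the same three-way case analysis on the relation of $S_A$ to $R$, showing $F(S)\le \tfrac14$ when $S_A=R$ and $F(S)\ge\tfrac12$ otherwise, then invoking uniqueness of $S_g^*$. For Property 3, however, you take a genuinely different route. The paper verifies the union/intersection inequality $F(X)+F(Y)\ge F(X\cup Y)+F(X\cap Y)$ directly, organizing roughly a dozen cases by whether $X_A$ and $Y_A$ lie in $\mathcal{H}_{A,R}=\{S\subseteq A: S\subseteq R\text{ or }S\supseteq R\}$ and by their lattice relation to $R$. You instead verify the diminishing-marginals characterization $\partial_e F(X)\ge\partial_e F(Y)$ and split on $e\in B$ versus $e\in A$. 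Both are complete case analyses of comparable length, but your decomposition tracks the informal discussion in the paper's overview more closely: the $e\in B$ branch isolates exactly why the submodularizer $\phi$ is needed (the only configuration with nonzero $g$-marginals on both sides is $X_A=Y_A=R$, where submodularity of $g$ finishes, and otherwise the $\tfrac{1}{2|V|}$ gaps in the $\phi$-marginal absorb the $\tfrac{1}{4|V|}$-sized $g$-perturbation), and the $e\in A$ branch isolates exactly where $f_{A,R}$ rescues the non-submodularity of $\phi$. Your identification and explicit verification of the critical configuration $X_A=R$, $Y_A\supsetneq R$, $e\in A\setminus R$ is the same violation the paper exhibits when motivating the construction, and your bound $\partial_e F(X)\ge\tfrac14>0=\partial_e F(Y)$ is correct.

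One small imprecision worth fixing in the $e\in A$ branch: the $\phi$-corrections to the marginals are not ``$O(1/|V|)$-sized'' --- they can be as large as $\tfrac{|S_B|}{2|V|}\le\tfrac12$, as you yourself note earlier. The argument still closes because, for $e\in A$, the marginal of $\phi$ is always $0$ or $-|S_B|$ (never positive), so the difference $\partial_e\phi(X)-\partial_e\phi(Y)$ is bounded below by $-|X_B|$, and hence the correction to $\partial_e F(X)-\partial_e F(Y)$ is at least $-\tfrac12-\tfrac{1}{2|V|}$, which a unit drop in the $f_{A,R}$-marginal strictly dominates. State this sign observation explicitly when you write out the ``strict drop of at least $1$'' sub-cases; as phrased, the claimed domination over corrections of size up to $\tfrac12$ on each of the two sides is not immediate.
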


As mentioned in \Cref{subsec:techniques}, the main insight behind the proof of  \Cref{lem:properties_building_block} is that the scale of the function $\frac{1}{4M |V|} \cdot \ind(S_A = R) \cdot g(S_B)$ is smaller than that of $\frac{1}{2|V|} \cdot \phi_{V,A,R}(S)$, and both are much smaller than that of $f_{A,R}$. As such, the minimizer $S^*$ and the range of $F^{M,g}_{V,A,R}$ are dominantly determined by the function $f_{A,R}$, enforcing $S^*_A = R$ and thus $f_{A,R}(S^*_A) = \phi_{V,A,R}(S^*) = 0$. 
Moreover, most cases where submodularity fails to hold for the function $\frac{1}{4M |V|} \cdot \ind(S_A = R) \cdot g(S_B)$ can be corrected by the submodularizer $\frac{1}{2|V|} \cdot \phi_{V,A,R}(S)$, and the very few cases where submodularity fails to hold for $\frac{1}{2|V|} \cdot \phi_{V,A,R}(S)$ can be fixed by the dominant submodular function $f_{A,R}$. 
We postpone a formal proof of \Cref{lem:properties_building_block} to \Cref{sec:missing_proofs}.

%%%%%%%%%%%%%%%%%%%%%%%%%%%%%%%%%
% Subsec: The Function Family
%%%%%%%%%%%%%%%%%%%%%%%%%%%%%%%%%

\subsection{The Function Family}
\label{subsec:function_family}

Using our main building block described in \Cref{subsec:building_block}, we now define the function family $\mathcal{F}_r(V)$ recursively for all finite sets $V$ with $|V|$ divisible by $2r$. 

\smallskip
\noindent \textbf{The base case: when $|V| = 2r$.} In this case, we let $\mathcal{F}_r(V) := \{f_{V,R}: R \subseteq V, |R| = r\}$.  

\smallskip
\noindent \textbf{Recursive definition.}  
Suppose the function family $\mathcal{F}_r(V)$ has been defined for all $|V| = 2r(k-1)$ for integer $k \geq 2$, we now define the family $\mathcal{F}_r(V)$ for $|V| = 2rk$ as follows: 
\begin{align*}
\mathcal{F}_r(V) := \{F^{2,g}_{V,A,R}: R \subseteq A \subseteq V, |R| = |A|/2 = r, g \in \mathcal{F}_r(V\setminus A)\} ,
\end{align*}
where we recall from \eqref{eq:building_block} that \begin{align} \label{eq:recursive_defn}
    F^{2,g}_{V,A,R} = f_{A,R}(S_A) + \frac{1}{2|V|} \cdot \phi_{V,A,R}(S) + \frac{1}{8|V|} \cdot \ind(S_A = R) \cdot g(S_B) .
\end{align}
This completes the recursive definition of the family of functions $\mathcal{F}_r(V)$, where $|V|$ is divisible by $2r$. 
When $|V|$ is not a multiple of $2r$, we may also naturally extend the definition above by making $|V| - 2r \cdot \lfloor \frac{|V|}{2r} \rfloor$ elements ``dummy'' in $V$. More precisely, we let $V' \subseteq V$ be an arbitrary subset with size $|V'| = 2r \cdot \lfloor \frac{|V|}{2r} \rfloor$, and define the function family to only depend on elements in $V'$.

%%%%%%%%%%%%%%%%%%%%%%%%%%%%%%%%%
% Subsubsec: Explicit Formula
%%%%%%%%%%%%%%%%%%%%%%%%%%%%%%%%%

\def\cA{\mathcal{A}}
\def\cR{\mathcal{R}}

\medskip
\noindent \textbf{Explicit Formula for Our Construction.}
We give more explicit expressions for functions in $\mathcal{F}_r(V)$ recursively defined above, assuming $|V|$ is divisible by $2r$. 
Let $\ell := |V| / 2r $, and consider any partition $\mathcal{A}$ of the universe $V = A_1 \cup A_2 \cup \cdots \cup A_\ell$, where $|A_i| = 2r$ for all $i \in [\ell]$.
Furthermore, we select subsets
$R_i \subseteq A_i$ for each $i \in [\ell]$ with size $|R_i| = r$. Let $\mathcal{R}$ denote the collection of these $R_i$'s. 
We denote $B_i := \cup_{j=i}^\ell A_j = V \setminus (\cup_{j=1}^{i-1} A_j)$ the remaining set of elements when $A_1, \cdots, A_{i-1}$ are removed from $V$.
Given the partition $\cA$ and the family of subsets $\cR$, we
define a function $F_{\cA,\cR}: 2^V \rightarrow \mathbb{R}$ as follows. 
For any $S \subseteq V$, let $k_S$ be the smallest index $k \in [\ell]$ such that $S_{A_k} := S \cap A_k \neq R_k$.
If such an index $k_S$ does not exist, that is $S\cap A_k = R_k$ for all $k\in [\ell]$,  then we set $F_{\cA,\cR}(S) := 0$. Otherwise, we define its value
\begin{align} \label{eq:explicit_defn}
F_{\cA,\cR}(S) := \left(\prod_{j = 0}^{k_S-2} \frac{1}{8(|V| - 2 j r)} \right) \cdot \left(  f_{A_{k_S}, R_{k_S}}(S_{A_{k_S}}) + \frac{1}{2|B_{k_S}|} \cdot \phi_{B_{k_S}, A_{k_S}, R_{k_S}}(S_{B_{k_S}}) \right) 
\end{align}
where $f_{A_{k_S}, R_{k_S}}$ and $\phi_{B_{k_S}, A_{k_S}, R_{k_S}}$ as defined in \eqref{eq:compI} and \eqref{eq:compII}.

We now claim that the function family $\mathcal{F}_r(V)$ defined above coincides with the collection of all functions $F_{\cA,\cR}$, for all partitions $V = A_1 \cup A_2 \cup \cdots \cup A_\ell$ with $|A_i| = 2r, \forall i \in [\ell]$ and subsets $R_i \subseteq A_i$ with $|R_i| = r, \forall i \in [\ell]$.
To see why this is the case, note that in \eqref{eq:recursive_defn}, the functions $f_{A_j, R_j}(S_{A_j}) = \phi_{B_j, A_j, R_j}(S_{B_j}) = 0$ for all $j \leq k_S - 1$, and the indicator $\ind(S_{A_{k_S}} = R_{k_S}) = 0$. It follows that the functions $f_{A_{k_S}, k_S}$ and $\phi_{B_{k_S}, A_{k_S}, R_{k_S}}$ are the only non-zero components when we expand out the recursive part $g$ in \eqref{eq:recursive_defn}.

The explicit expression \eqref{eq:explicit_defn} reveals important insights into why functions in $\mathcal{F}_r(V)$ take a large number of rounds to minimize. 
Roughly speaking, any query $S$ would only reveal information about the subsets $R_j \subseteq A_j$ for $j \leq k_S$, but nothing about subsets $R_j \subseteq A_j$ for any $j \geq k_S + 1$. 
If in each round of queries, an algorithm advances $k_S$ by at most $1$, then obtaining full information about the function $F_{\{A_i\},\{R_i\}}$ requires at least $n/2r$ rounds of queries.

%%%%%%%%%%%%%%%%%%%%%%%%%%%%%%%%%
% Subsubsec: Properties of Our Construction
%%%%%%%%%%%%%%%%%%%%%%%%%%%%%%%%%

\subsubsection{Properties of Our Construction}

The following lemma collects properties of the function family $\mathcal{F}_r(V)$. In particular, any function $F \in \mathcal{F}_r(V)$ is submodular, and its unique minimizer admits a partition structure. 
These properties follow from the corresponding properties of our main building block proved in \Cref{lem:properties_building_block}

\begin{lemma}[Properties of our construction] \label{lem:properties_function_family}
Let $V$ be a finite set of elements and $r \in \mathbb{Z}_+$ satisfies $2r$ divides $|V|$. 
Then any function $F \in \mathcal{F}_r(V)$ satisfies the following properties:
\begin{enumerate}
    \item (Non-negativity and boundedness) For any subset $S \subseteq V$, we have $F(S) \in [0,2]$, 
    \item (Unique Minimizer) $F$ has a unique minimizer of the form $S^* = \cup_{i=1}^\ell R_i$, where $V = A_1 \cup \cdots \cup A_\ell$ forms a partition with $\ell = |V|/2r$ and $|A_i| = 2r, \forall i \in [\ell]$, and subsets $R_i \subseteq A_i$ have size $|R_i| = r, \forall i \in [\ell]$, 
    \item (Submodularity) $F$ is submodular. 
\end{enumerate}
\end{lemma}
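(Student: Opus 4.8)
The plan is to prove \Cref{lem:properties_function_family} by induction on $\ell := |V|/2r$, using \Cref{lem:properties_building_block} as the inductive step. The base case $\ell = 1$ (i.e.\ $|V| = 2r$) is immediate: here $\mathcal{F}_r(V) = \{f_{V,R} : |R| = r\}$, and by \eqref{eq:compI} each $f_{V,R}$ takes values in $\{0,1,2\} \subseteq [0,2]$, has $R$ as its unique minimizer (the only set with value $0$), and is submodular by \Cref{lem:submodularity_f}. This establishes all three properties, with the trivial partition $V = A_1$ and $R_1 = R$.

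For the inductive step, suppose the lemma holds for all finite sets of size $2r(k-1)$, and let $|V| = 2rk$ with $k \geq 2$. Any $F \in \mathcal{F}_r(V)$ has the form $F = F^{2,g}_{V,A,R}$ for some $R \subseteq A \subseteq V$ with $|R| = |A|/2 = r$ and $g \in \mathcal{F}_r(B)$ where $B := V \setminus A$, $|B| = 2r(k-1)$. By the inductive hypothesis, $g$ is submodular, takes values in $[0,2]$ (so $M = 2$ is a valid bound, $\max_{S \subseteq B}|g(S)| \leq 2$), and has a unique minimizer $S_g^* = \bigcup_{i=2}^{\ell} R_i$ with the claimed partition structure on $B = A_2 \cup \cdots \cup A_\ell$. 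Now I invoke \Cref{lem:properties_building_block} with this $g$ and $M = 2$: it yields directly that $F$ is submodular, that $F(S) \in [0,2]$ for all $S \subseteq V$, and that $F$ has the unique minimizer $R \cup S_g^*$. Setting $A_1 := A$, $R_1 := R$, we get the partition $V = A_1 \cup A_2 \cup \cdots \cup A_\ell$ with $|A_i| = 2r$ and $R_i \subseteq A_i$, $|R_i| = r$ for all $i$, and unique minimizer $S^* = \bigcup_{i=1}^{\ell} R_i$, completing the induction.

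The only point requiring a small sanity check is consistency of the $M$-parameter across the recursion: in the recursive definition \eqref{eq:recursive_defn} the construction uses $F^{2,g}_{V,A,R}$, i.e.\ it hard-codes $M = 2$, and the factor $\frac{1}{8|V|} = \frac{1}{4 \cdot 2 \cdot |V|} = \frac{1}{4M|V|}$ in \eqref{eq:recursive_defn} matches \eqref{eq:building_block} precisely when $M = 2$. So I must check that whatever $g$ we plug in genuinely satisfies $\max_{S}|g(S)| \leq 2$; this is exactly the boundedness property (item~1) supplied by the inductive hypothesis, which is why the three properties must be carried through the induction together rather than proved separately. (In the base case one uses that $f_{V,R}$ has range $\{0,1,2\}$, again giving the bound $2$.)

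I expect the main obstacle to lie not in this lemma itself — which is a clean induction once \Cref{lem:properties_building_block} is in hand — but in \Cref{lem:properties_building_block}, whose proof is deferred to \Cref{sec:missing_proofs}. There, the delicate part is verifying submodularity of $F^{M,g}_{V,A,R}$ via a case analysis on the marginal $\partial_e F(X) \leq \partial_e F(Y)$ for $X \subseteq Y$ and $e \notin Y$, split according to whether $e \in A$ or $e \in B$ and according to the relation of $X_A, Y_A$ to $R$ (strict subset / equal / strict superset / incomparable). The design of $\phi_{V,A,R}$ and the scale separation $\frac{1}{4M|V|} \ll \frac{1}{2|V|} \ll 1$ are tuned exactly so that, in each case, a failure of submodularity at one scale is dominated and corrected by a strict gain at the next coarser scale; confirming that every case closes is the real work. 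For the present lemma, though, assuming \Cref{lem:properties_building_block} as permitted, the proof is the two-line induction sketched above.
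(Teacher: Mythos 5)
Your proof is correct and takes essentially the same route as the paper: induction on the size of the ground set, with the base case handled by the properties of $f_{V,R}$ and the inductive step by invoking \Cref{lem:properties_building_block} with $M=2$, using the boundedness and unique-minimizer guarantees from the inductive hypothesis to license that invocation. Your remark on why the three properties must be carried through the induction together is exactly the right observation.
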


\begin{proof}
We prove the lemma by induction based on the size of the ground set $V$.

\smallskip
\noindent \textbf{The base case.} The base case is when $|V| = 2r$ and the statement in this case follows because the function $f_{V,R}$ has range $\{0,1,2\}$, unique minimizer $R$ and is submodular by \Cref{lem:submodularity_f}. 

\noindent \textbf{The induction step.}
Suppose we have proven the three properties of the lemma when the size of the ground set is $2r(k-1)$ for some $k \geq 2$, we now prove the three properties for $|V| = 2rk$.  

Note that any function $F \in \mathcal{F}_r(V)$ takes the form 
\begin{align*}
F(S) = F^{2,g}_{V,A,R}(S) = f_{A,R}(S_A) + \frac{1}{2|V|} \cdot \phi_{V,A,R}(S) + \frac{1}{8|V|} \cdot \ind(S_A = R) \cdot g(S_B) . 
\end{align*} 
for some subsets $R \subseteq A \subseteq V$ such that $|R| = |A|/2 = r$, and function $g \in \mathcal{F}_r(B)$ with $B = V \setminus A$.
By induction hypothesis, $g$ satisfies the three properties in the lemma. 
The three properties for function $F$ then follows immediately from applying  \Cref{lem:properties_building_block} with $M=2$. 
\end{proof}

%%%%%%%%%%%%%%%%%%%%%%%%%%%%%%%%%
% Section: Lower Bounds
%%%%%%%%%%%%%%%%%%%%%%%%%%%%%%%%%

\section{Lower Bounds}
\label{sec:lower_bounds}

In this section, we leverage our construction of the function family $\mathcal{F}_r(V)$ from \Cref{sec:construction} to prove lower bounds for SFM. In \Cref{sec:lower-det}, we prove an $\Omega(n \log n)$ evaluation query complexity lower bound for any deterministic algorithm that minimizes functions in $\mathcal{F}_r(V)$, even when $r = 1$.
Then, in \Cref{sec:lower-parallel}, 
we show that any randomized parallel SFM algorithm that makes at most $Q = \poly(n)$ evaluation oracle queries per round, with high probability, takes at least $\Omega(n / \log n)$ rounds to minimize a uniformly random function $F \in \mathcal{F}_r(V)$ for $r = \Theta(\log n)$.

%%%%%%%%%%%%%%%%%%%%%%%%%%%%%%%%%
% Subsec: Oracle Lower Bound
%%%%%%%%%%%%%%%%%%%%%%%%%%%%%%%%%

\subsection{Query Complexity Lower Bound for Deterministic Algorithms}
\label{sec:lower-det}

In this subsection, we prove the query complexity lower bound for deterministic SFM algorithms in \Cref{thm:oracle_complexity_lb_det}, with the function $F$ chosen adversarially from the function family $\mathcal{F}_1(V)$. 
More specifically, we prove the following theorem which immediately implies \Cref{thm:oracle_complexity_lb_det}. 

\begin{theorem}[Query complexity lower bound for deterministic algorithms] \label{thm:oracle_complexity_lb_det_body}
Let $V$ be a finite set with $n$ elements. 
For any deterministic SFM algorithm $\ALG$, there exists a submodular function $F \in \mathcal{F}_1(V)$ such that $\ALG$ makes  at least $\frac{n}{2} \log_2 (\frac{n}{4})$ evaluation oracle queries to minimize $F$. 
\end{theorem}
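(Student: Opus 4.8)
The plan is to combine two ingredients: (i) an adversary argument showing that, for any deterministic algorithm, it takes many queries before the algorithm can even identify a single ``level'' set $R_i$, and (ii) the recursive structure of $\mathcal{F}_1(V)$, which guarantees that until the algorithm has nailed down $R_1,\dots,R_{i-1}$ exactly, it receives no information whatsoever about $A_i, R_i, A_{i+1}, \ldots$. Concretely, I will fix $r=1$, so each $A_i$ is a pair $\{a_i, a_i'\}$ and $R_i$ is a singleton inside it, and there are $\ell = n/2$ levels. I will build the hard function $F_{\cA,\cR}$ on the fly in response to the algorithm's queries, maintaining an ``active universe'' $U$ of elements whose membership in the current level's pair has not yet been committed.

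The core lemma to isolate is: \emph{for a deterministic algorithm, starting from an active universe of size $m$, it takes at least $\lfloor \log_2 m\rfloor - 1$ queries before the algorithm issues a query $S$ with $S_{A_i} = R_i$ for the current level $i$} — i.e., before the algorithm can make the indicator $\ind(S_{A_i}=R_i)$ fire and thereby unlock level $i+1$. The proof of this lemma is the adversary walk already sketched in the introduction: from the explicit formula \eqref{eq:explicit_defn}, the only information a query $S$ reveals about level $i$ (before $R_i$ is found) is whether $S \cap A_i$ equals $\emptyset$, equals $A_i$, or is a singleton; these correspond via $f_{A_i,R_i}$ and $\phi$ to the values ``subset of $R_i$'', ``superset of $R_i$'', or ``neither''. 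The adversary commits to never answering ``neither'' (equivalently, never letting $S$ contain exactly the non-$R_i$ element of $A_i$): on each query it looks at $|S \cap U|$; if this is $\le |U|/2$ it declares that both elements of the pair $A_i$ lie outside $S$ (answer: ``$S\cap A_i=\emptyset\subsetneq R_i$''), and shrinks $U$ to $U\setminus S$; otherwise it declares both elements of $A_i$ lie inside $S$ (answer ``$S\cap A_i = A_i \supsetneq R_i$'') and shrinks $U$ to $U\cap S$. In either case $|U|$ at most halves, and the algorithm's query never forces the pair's location, so $R_i$ (which is one of the two still-uncommitted elements) is never hit until $|U|<2$; this gives the claimed $\lfloor\log_2 m\rfloor - 1$ bound, and it crucially uses determinism so the adversary may defer its choice of $A_i$.

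Given the lemma, I would run the adversary across all $\ell = n/2$ levels sequentially. When the algorithm finally hits level $i$'s set $R_i$, the adversary has pinned down $A_i$ and $R_i$; it then freezes those two elements, takes the remaining uncommitted elements as the new ground set $B_i$ for the recursion, resets $U := B_i$, and begins the argument for level $i+1$ — legitimate because, by the recursive definition of $\mathcal{F}_1$ (and the explicit formula), all queries so far reveal nothing about $B_i$'s internal structure, so any completion of $(\cA,\cR)$ on $B_i$ is still consistent. At the start of level $i$ the active universe has size $n - 2(i-1)$, so level $i$ costs at least $\lfloor \log_2(n-2(i-1))\rfloor - 1 \ge \log_2(n/4)$ queries whenever $i \le n/4$ (say), using $n-2(i-1)\ge n/2$ in that range and absorbing the additive constants. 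Summing over the first $n/4$ levels — or more carefully over all $\ell=n/2$ levels and lower-bounding the sum $\sum_{i=1}^{n/2}(\lfloor\log_2(n-2(i-1))\rfloor-1)$ — yields at least $\frac{n}{2}\log_2(\frac n4)$ total queries. Finally, I note the algorithm cannot certify optimality without having determined every $R_i$ (since $F_{\cA,\cR}$ has a unique minimizer $\bigcup_i R_i$ by \Cref{lem:properties_function_family}, and any not-yet-committed level leaves the adversary freedom to move the minimizer), so it really must pay for all the levels; \Cref{thm:oracle_complexity_lb_det} follows.

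I expect the main obstacle to be the bookkeeping of the adversary's consistency across levels: one must argue carefully that the partial answers given during levels $1,\dots,i$ are simultaneously consistent with \emph{some} single function $F_{\cA,\cR}\in\mathcal{F}_1(V)$, i.e. that freezing $A_i,R_i$ and recursing on $B_i$ never contradicts an earlier answer. This reduces to checking that the value $F_{\cA,\cR}(S)$ of an already-answered query $S$ depends only on data already committed at the time it was answered — which is exactly what the product-form explicit expression \eqref{eq:explicit_defn} gives, since $k_S$ is determined by the first level at which $S$ deviates from $\cR$, and the adversary's shrinking rule ensures the reported value matches the $f_{A_{k_S},R_{k_S}} + \frac{1}{2|B_{k_S}|}\phi$ term with the committed sets. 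The arithmetic of summing the logarithms is routine and I will not belabor it.
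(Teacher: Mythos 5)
Your proposal is correct and follows essentially the same route as the paper: the same halving-adversary lemma showing that a deterministic algorithm needs $\lfloor \log_2 m\rfloor - 1$ queries on an active universe of size $m$ before it can hit a set with $S_{A_i}=R_i$, combined with the observation that until that happens no information about the recursively defined inner function leaks. The only cosmetic difference is that the paper packages the level-by-level accounting as an induction on the worst-case query count $h(n)\ge h(n-2)+\lfloor\log_2 n\rfloor-1$ (handing the algorithm $(A,R)$ for free after $T$ queries), whereas you run the adversary explicitly across all $n/2$ levels and sum.
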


Let us fix a deterministic algorithm $\ALG$. We prove that there exists a function $F\in \mathcal{F}_1(V)$ on which $\ALG$ must make at least $\frac{n}{2}\log \left(\frac{n}{4}\right)$ 
evaluation oracle queries. From~\eqref{eq:recursive_defn}, recall that any function $F \in \mathcal{F}_1(V)$ is specified by subsets $R \subseteq A \subseteq V$ where $|A| = 2$ and $|R| = 1$, and a function $g \in \mathcal{F}_1(B)$, where $B := V \setminus A$. As $R$ contains only a single element and we abuse notation and call that element $R$ as well.
The function $F$ is then given by $F(S) := f_{A,R}(S_A) + \frac{1}{2|V|} \cdot \phi_{V,A,R}(S) + \frac{1}{8|V|} \cdot \ind(S_A = R) \cdot g(S_B)$. Recall $S_A$ is the shorthand for $S\cap A$
and $S_B$ is the shorthand for $S\cap B$.
By \Cref{lem:properties_function_family}, $F(S)$ has a unique minimizer $S^*$ with $S^*_A = R$ and $S^*_B$ is the unique minimizer of $g(S_B)$. 

By construction, until $\ALG$ queries a set $S$ with $S_A = R$, that is, $S\cap A$ is precisely the singleton $R$, it obtains no information about $g$.
More precisely, the answers given to $\ALG$ are the same no matter which $g\in \mathcal{F}_1(B)$ is picked. The heart of the lower bound is the following lemma
which asserts that an adversary can always choose an $(A,R)$ pair such that the first $O(\log n)$-queries of $\ALG$ ``miss $R$'', that is, $S_i\cap A \neq R$.

\begin{lemma} \label{claim:oracle_lb}
	Fix a deterministic algorithm $\ALG$ and let $T := \lfloor \log n \rfloor - 1$. 
	There exist $R \subseteq A \subseteq V$ with $|R| = 1$ and $|A| = 2$ such that the first $T$ (possibly adaptive) queries $S^1, \cdots, S^T$ made by $\ALG$  to the evaluation oracle $\EO$ satisfy $S^i_A \neq R$ for all $i \in [T]$. 
\end{lemma}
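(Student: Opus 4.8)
The plan is to run an adversary argument against the deterministic algorithm $\ALG$, maintaining an "active universe" $U \subseteq V$ that shrinks by at most a factor of $2$ per query, together with the invariant that the adversary has committed to placing both elements of the eventual pair $A$ inside $U$. Concretely, I would process the queries $S^1, S^2, \dots$ one at a time. Before query $i$ the adversary holds a set $U_{i-1}$ (with $U_0 = V$) and has answered all previous queries consistently with "$A \subseteq U_{i-1}$ and the value of $\phi_{V,A,R}$ has never been negative or positive in a way that pins $A$ down" — more precisely, with every past query having received the answer $f_{A,R}(S^j_A) = 2$ being forbidden. When query $S^i$ arrives, compare $|S^i \cap U_{i-1}|$ to $|U_{i-1}|/2$. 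If $|S^i \cap U_{i-1}| \le |U_{i-1}|/2$, the adversary decides $A \subseteq U_{i-1} \setminus S^i$, so that $S^i_A = \emptyset \subsetneq$ (any singleton $R$); it answers as if $S^i_A = \emptyset$ — that is, $f_{A,R}(\emptyset) = 1$ since $\emptyset \subsetneq R$, and $\phi_{V,A,R}(S^i) = |S^i_B| = |S^i|$ — and sets $U_i := U_{i-1} \setminus S^i$. If instead $|S^i \cap U_{i-1}| > |U_{i-1}|/2$, the adversary decides $A \subseteq S^i \cap U_{i-1}$, so that $S^i_A = A \supsetneq R$; it answers with $f_{A,R}(A) = 1$ and $\phi_{V,A,R}(S^i) = -|S^i_B|$, and sets $U_i := S^i \cap U_{i-1}$. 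In either case $|U_i| \ge |U_{i-1}|/2$, wait — rather $|U_i| \ge \lceil |U_{i-1}|/2 \rceil$ in the second case and $|U_i| \ge |U_{i-1}|/2$ in the first; either way $|U_i| \ge |U_{i-1}|/2$.

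The key claim is that these answers are \emph{consistent}: there exists a genuine pair $R \subseteq A \subseteq V$ with $|A| = 2$, $|R| = 1$, $A \subseteq U_T$, and a function $g \in \mathcal{F}_1(V \setminus A)$, such that $F = F^{2,g}_{V,A,R}$ returns exactly the claimed values on $S^1, \dots, S^T$. This is where I expect the main work to lie. After $T$ rounds, as long as $|U_T| \ge 2$, the adversary can pick any two elements of $U_T$ to be $A$ and any one of them to be $R$; one must check that \emph{every} query $S^i$ satisfies $S^i_A \in \{\emptyset, A\}$ — never equal to the singleton $R$ and never equal to the "other" singleton — and that the returned value of $F(S^i)$ under \eqref{eq:recursive_defn} matches what was reported. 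The point is that the adversary's choices force $A \subseteq U_i \subseteq U_{i-1}$, and since in the first case $A \cap S^i = \emptyset$ and in the second case $A \subseteq S^i$, we get $S^i_A = \emptyset \subsetneq R$ or $S^i_A = A \supsetneq R$ for that query; moreover once $A \subseteq U_j$ for all $j \ge i$, later shrinking steps preserve $A \subseteq U_j$, so the commitment made at step $i$ is never retracted. Since $S^i_A \ne R$ in every case, the indicator $\ind(S^i_A = R) = 0$, so $F(S^i)$ depends only on $f_{A,R}(S^i_A) + \frac{1}{2|V|}\phi_{V,A,R}(S^i)$, which is exactly what the adversary computed — and this value is independent of the choice of $g$, so any $g \in \mathcal{F}_1(V \setminus A)$ completes a valid function.

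It remains to count: we need $|U_T| \ge 2$, and since $|U_i| \ge |U_{i-1}|/2$ we have $|U_T| \ge n / 2^T$. Taking $T = \lfloor \log n \rfloor - 1$ gives $2^T \le 2^{\log n - 1} = n/2$, hence $|U_T| \ge n / (n/2) = 2$, so the adversary can indeed choose $A$ and $R$ at the end. This establishes the lemma. The one subtlety to be careful about — and the step I'd flag as the main obstacle — is the consistency argument: one must verify that the sequence of answers the adversary hands out is \emph{simultaneously} realizable by a single fixed $(A, R, g)$, not merely realizable one query at a time, and in particular that freezing the decision "$A \subseteq U_i \setminus S^i$" or "$A \subseteq S^i \cap U_{i-1}$" at step $i$ does not conflict with a decision forced at a later step. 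This follows from the nesting $U_T \subseteq \cdots \subseteq U_1 \subseteq U_0 = V$ together with the observation that each step's decision is of the form "$A$ lies in $U_i$", which is monotone under the nesting; I would spell this out by simply taking $A$ to be any $2$-subset of the final $U_T$ and then checking, for each $i$, that the partition of $S^i$ into $S^i \cap A$ and $S^i \cap B$ is as the adversary assumed, using $A \subseteq U_T \subseteq U_i$ and the case split that produced $U_i$ from $U_{i-1}$.
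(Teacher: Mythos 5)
Your proposal is correct and follows essentially the same adversary argument as the paper: maintain a nested active set, answer each query as ``$S^i_A=\emptyset$'' or ``$S^i_A=A$'' according to a majority split, halve the active set, and pick $A$ and $R$ from the $\geq 2$ surviving elements at the end, with consistency guaranteed by the nesting $A\subseteq U_T\subseteq\cdots\subseteq U_0$. The minor difference in tie-breaking at $|S^i\cap U_{i-1}|=|U_{i-1}|/2$ is immaterial, and your counting $|U_T|\geq n/2^T\geq 2$ matches the paper's.
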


Before we prove the above lemma, let us first use it to prove~\Cref{thm:oracle_complexity_lb_det}.

\begin{proof}[{\bf Proof of~\Cref{thm:oracle_complexity_lb_det}}]
Fix a deterministic algorithm $\ALG$. 
For any even integer $n \geq 2$, let $h(n)$ denote the smallest integer such that $\ALG$ makes at most $h(n)$ oracle calls to minimize any submodular function $F \in \mathcal{F}_1(V)$ with $|V| = n$, 
even when $\ALG$ is given the information that the submodular function is picked from this family.
We claim that $h(n) \geq \frac{n}{2} \log (\frac{n}{4})$. Since by~\Cref{lem:properties_function_family}, any function $F \in \mathcal{F}_1(V)$ is submodular, this would imply~\Cref{thm:oracle_complexity_lb_det}.
We prove the claim by induction; the base case of $n=2$ holds vacuously.

Let $T = \lfloor \log n \rfloor - 1$. 
By \Cref{claim:oracle_lb}, we can choose subsets $R \subseteq A \subseteq V$ such that $|R| = 1$, $|A| = 2$, and for the first $T$ (possibly adpative) queries $S^1, \ldots, S^T$ of $\ALG$,
we have $S^i_A \neq R$ hold for all $i \in [T]$. Now consider the function $F\in \mathcal{F}_1(V)$ defined as  
\[F(S) := f_{A,R}(S_A) + \frac{1}{2|V|} \cdot \phi_{V,A,R}(S) + \frac{1}{8|V|} \cdot \ind(S_A = R) \cdot g(S_B),\] where $(A,R)$ are these subsets, $B = V\setminus A$, 
and $g$, by induction, is the function in $\mathcal{F}_1(B)$ on which $\ALG$ takes $h(n-2)$ queries (since $|B| = |V| - 2$) to find the unique minimizer.
By the choice of $(A,R)$, since $S^i_A \neq R$, the evaluations of $F(S^i)$ are the same {\em for all} $g\in \mathcal{F}_1(B)$. In other words,
in its first $T = \lfloor \log n \rfloor - 1$ queries, $\ALG$ does not obtain any information about the function $g$. 

After $T$ queries, suppose we  provide $\ALG$ with $(A,R)$. By~\Cref{lem:properties_function_family}, $\ALG$ now needs to minimize $g$.
Since the answers received by $\ALG$ are consistent with any $g\in \mathcal{F}_1(B)$, by induction, $\ALG$ takes at least $h(n-2)$ queries to minimize $g$.
Therefore, we get the recursive inequality $h(n) \geq h(n-2) + \lfloor \log n \rfloor - 1$. This implies $h(n) \geq \frac{n}{2} \log (\frac{n}{4})$. 
proving the theorem statement. \end{proof}

Now we are left to prove \Cref{claim:oracle_lb}. 

\begin{proofof}{\Cref{claim:oracle_lb}}
The proof is via an adversary argument where the $\EO$ is an adversary trying to foil the deterministic algorithm $\ALG$. 
In particular, $\EO$ can choose to not commit to the sets $(A,R)$ in the definition of the function $F\in \mathcal{F}_1$ at the beginning. Instead, at every query $S^i$,
the adversary oracle $\EO$ gives an answer consistent with a function $F(S) = f_{A,R}(S_A) + \frac{1}{2|V|} \cdot \phi_{V,A,R}(S) + \ind(S_A = R)g(S_B)$ for some $(A,R)$
such that $S^i_A \neq R$ and such that all previous query answers are also consistent with $S$. We now show that this is possible for the first $T$ queries.

It is in fact convenient to consider the following modified evaluation oracle $\EO'$. When queried with a set $S \subseteq V$, $\EO'$ returns the following information: (1) whether $S_A = R$, or $S_A \subsetneq R$, or $R \subsetneq S_A$, or if $S_A$ is neither a subset nor a superset of $R$, and (2) the size of $|S_A|$. 
Note that unless $S_A = R$, the information returned by $\EO'$ is enough for the algorithm to compute $F(S)$. 
Indeed, when $S_A \neq R$, the function $F(S) = f_{A,R}(S_A) + \frac{1}{2|V|} \cdot \phi_{V,A,R}(S)$ so the information in (1) and (2), together with $|S|$ determine the value of $F(S)$. 
In short, we can use $\EO'$ to simulate $\EO$ till a query $S$ with $S_A = R$ is made. We now show how to construct the adversary $\EO'$ such that in the first $T$ queries, 
it can give answers such that $S^i_A \neq R$ for all $i\in [T]$ and there exists an $R\subseteq A\subseteq V$ consistent with all answers given so far.

The adversary $\EO'$ maintains an active set $U^1$ of elements which is initialized to $V$. Consider the first query $S^1$ made by $\ALG$.
If $|U^1 \cap S^1| \geq |U^1|/2$, then $\EO'$ does the following: (a) it sets $U^2 \leftarrow U^1\cap S^1$, and (b) answers $S^1_A = A$, that is, $R\subsetneq S^1_A$ and $|S^1_A| = 2$.
If $|U^1 \cap S^1| < |U^1|/2$, then $\EO'$ does the following: (a) it sets $U^2 \leftarrow U^1\setminus S^1$, and (b) answers $S^1_A = \emptyset$, that is, $R\supsetneq S^1_A$ and $|S^1_A| = 0$.
In short, the adversary $\EO'$ commits that $A\subseteq U^2$, and for {\em any} such $A$ and any $R\subseteq A$, the answer given above would be consistent.

More generally, at the beginning of round $i$, the adversary $\EO'$ has an active set $U^i$ with $\geq 4$ elements. Upon query $S^i$, 
if $|U^i\cap S^i| \geq |U^i|/2$, then $\EO'$ answers $R\subsetneq S^i_A$ and $|S^i_A| = 2$, and modifies $U^{i+1} \leftarrow U^i \cap S^i$, 
otherwise, $\EO'$ answers $R\supsetneq S^i_A$ and $|S^i_A| = 0$, and modifies $U^{i+1} \leftarrow U^i \setminus S^i$.

Since the size of $U^i$ can at most halve, at the end of $T = \lfloor \log_2(n) \rfloor - 1$ rounds, the adversary $\EO'$ ends up with a set
$U^{T+1}$ with $\geq 2$ elements. At this point, $\EO'$ can choose any subset $R\subseteq A \subseteq U^{T+1}$ with $|A| = 2$ and $|R|=1$, 
and (a) all answers given above are consistent, and (b) $S^i_A \neq R$ for all $i\in [T]$. This completes the proof of the lemma.
\end{proofof}

\begin{remark}
	We note that~\Cref{claim:oracle_lb} is false if $\ALG$ is allowed to be randomized. Indeed, if $|A| = 2$ and $R\subseteq A$ has $|R| = 1$, then 
	any query $S$ which picks every element with probability $1/2$ will satisfy $S_A = R$ with probability $1/4$. Therefore, the proof idea breaks down for 
	randomized algorithms. On the other hand, we do not know of a randomized algorithm for minimize functions in $\mathcal{F}_1(V)$ that makes $O(n)$ queries
	and succeeds with constant probability.
\end{remark}

%%%%%%%%%%%%%%%%%%%%%%%%%%%%%%%%%
% Subsec: Parallel Lower Bound
%%%%%%%%%%%%%%%%%%%%%%%%%%%%%%%%%

\subsection{Parallel Lower Bound for Randomized Algorithms}
\label{sec:lower-parallel}

In this subsection, we prove the $\Omega(n/C \log n)$-lower bound on the number of rounds for (possibly randomized) parallel SFM algorithms in \Cref{thm:adaptive_lb}.  
By Yao's minimax principle, \Cref{thm:adaptive_lb} is implied by the following theorem where the function $F$ is chosen uniformly at random from the family $\mathcal{F}_r(V)$ with $r = C \log n$.

\begin{theorem}[Parallel lower bound for randomized algorithms] \label{thm:adaptive_lb_body}
Let $C \geq 2$ be any constant. Let $V$ be a finite set with $n$ elements, and $r \geq C \log n $ be an integer such that $2r$ divides $n$. Then 
any  parallel algorithm that makes at most $Q := n^C$ queries per round, and runs for $< (n/2r)$ rounds, fails to minimize a uniformly random submodular function $F \in \mathcal{F}_r(V)$, with high probability.
\end{theorem}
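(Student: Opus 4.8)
\textbf{Proof plan for \Cref{thm:adaptive_lb_body}.}
The plan is to argue that, with high probability over the random choice of $F \in \mathcal{F}_r(V)$, after each round of $Q = n^C$ queries the algorithm ``learns'' at most one more part of the partition, so that fewer than $\ell = n/2r$ rounds cannot suffice to identify the minimizer $S^* = \cup_{i=1}^\ell R_i$. By Yao's principle it suffices to handle a deterministic algorithm against the uniform random $F$. Recall from the explicit formula \eqref{eq:explicit_defn} that the value $F_{\cA,\cR}(S)$ depends only on the data $(A_1,R_1),\dots,(A_{k_S},R_{k_S})$, where $k_S$ is the first index with $S \cap A_{k_S} \neq R_{k_S}$; nothing about $(A_j,R_j)$ for $j > k_S$ is revealed. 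So the key quantity to control is how fast $k_S$ can advance.

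First I would set up a coupling / deferred-decisions argument: reveal the partition $\cA$ and the subsets $\cR$ lazily. Before round $t$, condition on the ``history'' consisting of the first $t-1$ parts $(A_1,R_1),\dots,(A_{t-1},R_{t-1})$ having been revealed (these are the only ones the algorithm could possibly know), while $A_t,\dots,A_\ell$ are still a uniformly random partition of $B_t := V \setminus (A_1\cup\cdots\cup A_{t-1})$ into blocks of size $2r$, with each $R_i$ a uniform $r$-subset of $A_i$. Define the good event $\mathcal{E}_t$ that none of the $Q$ queries $S^{t,1},\dots,S^{t,Q}$ issued in round $t$ satisfies $S \cap A_t = R_t$. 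The main step is to bound $\Pr[\neg \mathcal{E}_t \mid \text{history}]$: for a fixed query set $S$, over the randomness of $(A_t,R_t)$, the probability that $S \cap A_t = R_t$ is tiny. Concretely, conditioned on which $2r$ elements of $B_t$ form $A_t$, the event $S\cap A_t = R_t$ forces $R_t = S \cap A_t$, which has probability at most $1/\binom{2r}{r} \le 2^{-r} \cdot \mathrm{poly}$ — actually more carefully $\binom{2r}{r}\ge 2^{2r}/(2r+1)$, so the probability is at most $(2r+1)2^{-2r}$; and since $r \ge C\log n$ this is at most $n^{-2C}\cdot \mathrm{poly}(n) \le \tfrac{1}{2} n^{-C-1}$ for $n$ large (one should double-check the constant, and possibly strengthen the hypothesis to $r \ge C\log n + O(\log\log n)$ or simply absorb the polynomial factor, since the paper's statement uses $r \ge C\log n$; I would state the clean bound $\Pr[S\cap A_t = R_t] \le n^{-2C}$ using $\binom{2r}{r} \ge 2^r \ge n^{2C}$, which is loose but more than enough). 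Union bounding over the $Q = n^C$ queries in round $t$ gives $\Pr[\neg\mathcal{E}_t \mid \text{history}] \le n^C \cdot n^{-2C} = n^{-C} \le n^{-2}$.

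Next I would chain this across rounds. On $\mathcal{E}_1 \cap \cdots \cap \mathcal{E}_{t}$, every query $S$ in rounds $1,\dots,t$ has $k_S \le t$ (inductively: if all earlier rounds missed, then after round $t$ the algorithm has no information distinguishing the distribution of $(A_{t+1},R_{t+1})$, and its round-$(t+1)$ queries are ``blind'' to it), so the algorithm has obtained zero information about $(A_j,R_j)$ for $j \ge t+1$; in particular after $T < \ell$ rounds it has no information about $(A_\ell, R_\ell)$, hence by \Cref{lem:properties_function_family} (unique minimizer $\cup_i R_i$) cannot output the minimizer except with probability at most the chance of guessing $R_\ell$, which is $\le 2^{-r} = o(1)$. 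A union bound over the at most $\ell \le n$ rounds gives $\Pr[\exists t \le T : \neg\mathcal{E}_t] \le n \cdot n^{-2} = 1/n$, so with probability $\ge 1 - 1/n - o(1)$ the algorithm fails, which is the ``high probability'' claim.

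\textbf{Main obstacle.} The delicate point is the conditioning/adaptivity: the round-$t$ queries are chosen adaptively based on all previous oracle answers, so I must argue that, \emph{conditioned on} $\mathcal{E}_1\cap\cdots\cap\mathcal{E}_{t-1}$, the conditional distribution of $(A_t,\dots,A_\ell,R_t,\dots,R_\ell)$ is still uniform given only the revealed prefix — i.e. that the oracle answers in rounds $1,\dots,t-1$ are a deterministic function of $(A_1,R_1),\dots,(A_{t-1},R_{t-1})$ alone on the good event. This is exactly where \eqref{eq:explicit_defn} is used: on the event that all queries so far had $k_S \le t-1$, the value $F(S)$ is determined by the prefix, so the posterior on the suffix remains the product-uniform distribution, and the per-round bound above applies verbatim. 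I would formalize this as a claim proved by induction on $t$, and then the probabilistic estimates are routine. A secondary (purely cosmetic) issue is matching the exact constant $\frac{n}{2C\log n}$ in \Cref{thm:adaptive_lb}: that comes from taking $r = C\log n$ so $\ell = n/2r = n/(2C\log n)$, and noting $< \ell$ rounds fail.
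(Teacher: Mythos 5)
Your proposal is correct and follows essentially the same route as the paper: Yao's principle, lazy revelation of the prefix $(A_1,R_1),\dots,(A_{t-1},R_{t-1})$, the per-query bound $\Pr[S\cap A_t=R_t]\le (2r+1)/2^{2r}\le (2C\log n+1)/n^{2C}$ via $\binom{2r}{r}\ge 2^{2r}/(2r+1)$, a union bound over the $n^C$ queries per round and the $<n/2r$ rounds, and the observation that on the good event the answers are determined by the prefix so the suffix stays uniform. One caveat: your proposed ``clean'' shortcut $\binom{2r}{r}\ge 2^{r}\ge n^{2C}$ is false (with $r= C\log n$ one only gets $2^r=n^C$, which would make the union bound over $n^C$ queries vacuous), so you must stick with the $2^{2r}/(2r+1)$ estimate, which is exactly what the paper does.
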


\begin{proof}
By the recursive construction of the function family $\mathcal{F}_r(V)$ in \Cref{subsec:function_family}, we may view a random submodular function $F$ drawn from the uniform distribution over $\mathcal{F}_r(V)$ being obtained as follows. 
We first select a uniformly random subset $A_1 \subseteq V$ of size $|A_1| = 2r$ and a uniformly random subset $R_1 \subseteq A_1$ with size $|R_1| = r$. 
Denoting $B := V \setminus A_1$, we then draw a uniformly random function $g \in \mathcal{F}_r(B)$, and let $F(S) := f_{A_1,R_1}(S_{A_1}) + \frac{1}{2|V|} \cdot \phi_{V,A_1,R_1}(S) + \frac{1}{8|V|} \cdot \ind(S_{A_1} = R_1) \cdot g(S_B)$. 
Coupled with $F(S)$ in terms of the randomness of the subsets $A_1$ and $R_1$, we also let $F'(S) := f_{A_1,R_1}(S_{A_1}) + \frac{1}{2|V|} \cdot \phi_{V,A_1,R_1}(S)$.

Since we have specified a distribution over submodular functions, it suffices to prove that any deterministic algorithm which runs in $< \frac{n}{2r}$ rounds and makes $\leq n^C$ queries per round, fails to find the minimizer 
of $F$ with high probability. 
In the remainder we prove this statement.

Consider the set of queries $S^1_1, \cdots, S^Q_1$ made by a deterministic algorithm $\ALG$ in the first round. We start by showing that with high probability, $S^i_1\cap A_1 \neq R_1$ for all $i \in [Q]$. 
This is because for any $S^i_1$ and any fixed outcome of $A_1$, since $R_1$ is a uniformly random subset of $A_1$ with size $r$, there are $\binom{2r}{r} \geq \frac{2^{2r}}{2r+1} \geq \frac{n^{2C}}{2 C \log n + 1}$ possible choices of $R$. 
Therefore, for any query $S^i_1$ and any fixed outcome of $A_1$, the probability that $S^i_1 \cap A_1 = R_1$ is at most $\frac{2 C \log n + 1}{n^{2C}}$.
It then follows by a union bound over all $S^i_1$ that with probability at least $1 - \frac{2 C \log n + 1}{n^C}$, the event $\mathcal{E}_1 := \{S^i_1 \cap A_1 \neq R_1, \forall i \in [Q]\}$ holds.

Now conditioning on the event $\mathcal{E}_1$, the output of the evaluation oracle when queried with $S^i_1$ would be $F(S^i_1) = F'(S^i_1)$, for all $i \in [Q]$. Note, however, that the function $F'$ does not depend on the randomness of $g \in \mathcal{F}_r(B)$. Thus, even when given the information of $R$ and $A$ after the first round of queries, $\ALG$ does not obtain any information about the uniformly random function $g \in \mathcal{F}_r(B)$. Therefore, we can apply the argument in the previous paragraph to the set of queries $S^1_2, \ldots, S^Q_2$ in the second round of the algorithm.
In particular, with probability at least $1 - 1/n^C$, the event $\mathcal{E}_2 := \{S^i_2 \cap A_2 \neq R_2, \forall i \in [Q]\}$ holds. 

More generally, if the algorithm makes $k < n/2r$ rounds of queries, then with probability $\geq 1 - \frac{k(2 C \log n + 1)}{n^C} > 1 - \frac{1}{n^{C-1}}$ all the events $\mathcal{E}_i$ occur.
This implies that the answers obtained by the algorithm are consistent with any function in $\mathcal{F}_r(V)$ where
the sets $A_1, \ldots, A_k$ and $R_1, \ldots, R_k$ are fixed, but the sets $A_{k+1}, \ldots, A_{n/2r}$ and $R_{k+1}, \ldots, R_{n/2r}$ are completely random. Since the unique minimizer of $F$ is the set $(R_1 \cup R_2 \cup \cdots \cup R_{n/2r})$, no matter which set the deterministic algorithm returns, it will err with probability at least $1 - \frac{1}{n^{C-1}}$. This completes the proof of the theorem.
\end{proof}

\section{Acknowledgements}
We thank the anonymous reviewers of FOCS 2022 for helpful comments. Part of this work was done while Deeparnab Chakrabarty, Haotian Jiang and Aaron Sidford were attending the Discrete Optimization trimester program at the Hausdorff Research Institute for Mathematics. 

Deeparnab Chakrabarty is supported by NSF grant 2041920. 
Andrei Graur is supported by NSF CAREER Award CCF-1844855, NSF Grant CCF-1955039, and Stanford's Nakagawa Fellowship. 
Haotian Jiang is supported by NSF awards CCF-1749609, DMS-1839116, DMS-2023166, CCF-2105772, and a Packard Fellowship. 
Aaron Sidford is supported by a Microsoft Research Faculty Fellowship, NSF CAREER Award CCF-1844855, NSF Grant CCF-1955039, a PayPal research award, and a Sloan Research Fellowship.

\bibliographystyle{alpha}
\bibliography{bib.bib}

\appendix

\section{Proof of Properties of Main Building Block}
\label{sec:missing_proofs}

In this section, we give the proof for \Cref{lem:properties_building_block} which we restate below for convenience. 

\PropertyBuildingBlock*

\begin{proof}
We prove the three properties in the lemma statement separately below. 

\smallskip 
\noindent \textbf{Property 1: Non-negativity and boundedness.} 
For any subset $S \subseteq V$, we consider three different cases depending on the relation between $S_A$ and $R$. 

\noindent \textbf{Case 1: $S_A = R$.} In this case, $f_{A,R}(S_A) = 0$ and $\phi_{V,A,R}(S) = 0$, so we have
\begin{align*}
    F(S) = 0 + 0 + \frac{1}{4M |V|} \cdot g(S_B)
\end{align*}
Since $g(S_B) \in [0,M]$ in this case we get $F(S) \in [0, \frac{1}{4|V|}] \in [0,1/4]$.

\noindent \textbf{Case 2: $S_A \subsetneq R$ or $S_A \supsetneq R$.}
In this case, $f_{A,R}(S_A) = 1$ and $|\phi_{V,A,R}(S)| = |S_B| \leq |V|$. Furthermore, $\bone(S_A = R) = 0$.
Thus, 
\[
	F(S) = 1 + \frac{1}{2|V|} \cdot \phi_{V,A,R}(S)
\]
So, in this case, $F(S) \in [0.5, 1.5]$.

\noindent \textbf{Case 3: $S_A$ is neither a subset nor a superset of $R$.} In this case, $f_{A,R}(S_A) = 2$ and $\phi_{V,A,R}(S) = 0$, and therefore $F(S) = 2 \in [0,2]$.  

This completes the proof of Property 1.

\medskip
\noindent \textbf{Property 2: Unique minimizer.} 
An inspection of the cases in the above argument regarding Property 1 shows that for any subset $S$ with $S_A \neq R$, we have $F(S) \geq 0.5$, while when
$S_A = R$, we have $F(S) \leq 0.25$. Therefore, the minimizer $S$ of $F$ must have $S_A = R$.
Furthermore, when $S_A = R$ then $F(S) = \frac{1}{4M |V|} \cdot g(S_B)$ and the function is minimized when $S_B = S_g^*$.
This proves the second property in the lemma statement.

\medskip\noindent \textbf{Property 3: Submodularity.} This is the most interesting part of the proof.  
Let $X, Y \subseteq V$ be two arbitrary subsets of the ground set. Our goal is to prove 
\begin{align} \label{eq:submodularity}
F(X) + F(Y) \geq F(X \cup Y) + F(X \cap Y) .
\end{align}
In the following, we prove \eqref{eq:submodularity} by a case analysis. 
For convenience, define the collection of subsets of $A$ that are either subsets or supersets of $R$ as $\mathcal{H}_{A,R} := \{S \subseteq A: S \subseteq R \text{ or } S \supseteq R\}$.
Note that $R$ lies in this family as well.
We consider three different cases depending on whether or not $X_A$ and $Y_A$ lie in the set family $\mathcal{H}_{A,R}$. 
For notational simplicity, the subscripts in the notations $f_{A,R}$, $\phi_{V,A,R}$ and $\mathcal{H}_{A,R}$ will be dropped throughout the rest of this proof since the sets $V,A,R$ have been fixed and there is no ambiguity.  

\medskip
\def\cH{\mathcal{H}}
\noindent \textbf{(Case 1): $X_A, Y_A \notin \mathcal{H}$.} 
In this case, we have $\phi(X) = \phi(Y) = 0$, $f(X_A) = f(Y_A) = 2$, and $\ind(X_A = R) = \ind(Y_A = R) = 0$.
Thus the $\LHS$ of \eqref{eq:submodularity} is simply $F(X) + F(Y) = f(X_A) + f(Y_A) = 4$. 

Now, note that $(X\cup Y)_A := (X \cup Y) \cap A = X_A \cup Y_A$ and $(X\cap Y)_A := (X\cap Y)\cap A = X_A \cap Y_A$. Therefore, if  $X_A, Y_A \notin \mathcal{H}$, 
then neither $(X\cup Y)_A$ nor $(X\cap Y)_A$ can be $R$. If the former, then both $X_A, Y_A \subseteq R$ implying both are in $\cH$. If the latter, then 
both $X_A, Y_A \supseteq R$ implying both are in $\cH$. Therefore, the 
 $\RHS$ of \eqref{eq:submodularity} doesn't have any ``$g$-terms'', and is
\begin{align*}
    \RHS = f(X_A \cap Y_A) + f(X_A \cup Y_A) + \frac{1}{2|V|} \cdot (\phi(X \cap Y) + \phi(X \cup Y)) .
\end{align*}
Note that if we also have $X_A \cap Y_A, X_A \cup Y_A \notin \mathcal{H}$, then the contribution of $\phi$ to the $\RHS$ would be $0$, and $\LHS \geq \RHS$ follows from the submodularity of $f$ in \Cref{lem:submodularity_f}. 
So we only need to consider the scenarios where $X_A \cap Y_A \in \mathcal{H}$ or $X_A \cup Y_A \in \mathcal{H}$ (or both). 
In any of these scenarios, we have $f(X_A \cap Y_A) + f(X_A \cup Y_A) \leq 3$, since $f(S_A) = 1$ for $S_A \in \cH$. 
Now since $|\phi(S)| \leq |V|$ for any subset $S \subseteq V$, we have $\frac{1}{2|V|} \cdot (\phi(X \cap Y) + \phi(X \cup Y)) \leq 1$. Thus, $\RHS \leq 4$,  and \eqref{eq:submodularity} immediately follows.

\medskip
\noindent\textbf{(Case 2): $X_A, Y_A \in \mathcal{H}$.}
In this case, we need to consider multiple further subcases depending on whether $X_A$ or $Y_A$ coincide with $R$.

\smallskip
\noindent\textbf{Case 2.1: $X_A = Y_A = R$.} In this subcase, $F(S) = f(S_A) + \frac{1}{4M|V|} \cdot g(S_B)$ for all $S \in \{ X, Y, X\cap Y, X \cup Y\}$, so \eqref{eq:submodularity} follows from the submodularity of $f$ and $g$.

\smallskip
\noindent\textbf{Case 2.2: $R \subsetneq X_A, Y_A$.} In this subcase, we have $R \subsetneq X_A \cup Y_A$ and $R \subseteq X_A \cap Y_A$. If it happens that $X_A \cap Y_A = R$, then we have
\begin{align*}
\LHS &= f(X_A) + f(Y_A) + \frac{1}{2|V|} \cdot (\phi(X) + \phi(Y)), \\
\RHS &= f(R) + f(X_A \cup Y_A) + \frac{1}{4M|V|} \cdot g(X_B \cap Y_B) + \frac{1}{2|V|} \cdot \phi(X \cup Y) . 
\end{align*}
Notice that $f(X_A) = f(Y_A) = f(X_A \cup Y_A) = 1$ but $f(R) = 0$. It follows that
\begin{align*}
\LHS - \RHS & = 1 - \frac{1}{2|V|} \cdot (|X_B| + |Y_B|) + \frac{1}{2|V|} \cdot |X_B \cup Y_B| - \frac{1}{4M|V|} \cdot g(X_B \cap Y_B) \\
& = 1 - \frac{1}{2|V|} \cdot |X_B \cap Y_B| - \frac{1}{4M|V|} \cdot g(X_B \cap Y_B) > 0,
\end{align*}
where the last inequality follows because the range of $g$ is within $[0,M]$ by lemma assumption. 
If, on the other hand, that $R \subsetneq X_A \cap Y_A$, then the $\RHS$ of \eqref{eq:submodularity} becomes
\begin{align*}
\RHS = f(X_A \cap Y_A) + f(X_A \cup Y_A) + \frac{1}{2|V|} \cdot (\phi(X \cap Y)  + \phi(X \cup Y)). 
\end{align*}
By a simple counting we have
\begin{align*}
\phi(X) + \phi(Y) = - (|X_B| + |Y_B|) = - (|X_B\cap Y_B| + |X_B\cup Y_B|) = \phi(X \cap Y) + \phi(X \cup Y) . 
\end{align*}
and in this case, $\LHS - \RHS = 0$.

\smallskip
\noindent\textbf{Case 2.3: $X_A, Y_A \subsetneq R$.} The analysis in this subcase is almost identical to \textbf{Case 2.2}. 

\smallskip
\noindent\textbf{Case 2.4: $X_A \subsetneq R \subsetneq Y_A$ or $Y_A \subsetneq R \subsetneq X_A$.} We assume it is the former by symmetry between $X$ and $Y$. Then we have $X_A \cap Y_A = X_A \subsetneq R$ and $X_A \cup Y_A = Y_A \supsetneq R$. From the definition of $F$, it follows that 

\begin{align*}
\LHS - \RHS &= \left(f(X_A) + f(Y_A) - f(X_A \cap Y_A) - f(X_A \cup Y_A)\right) + \\ 
		& ~~~~~~~~~~~~~~~~\frac{1}{2|V|}\cdot \left(\phi(X) + \phi(Y) - \phi(X \cap Y) +-\phi(X \cup Y) \right)
\end{align*}
The first term is $\geq 0$ because of the submodularity of $f$. Furthermore, in this case
\begin{align*}
\phi(X) + \phi(Y) &= \frac{1}{2|V|} \cdot (|X_B| - |Y_B|) \\
\phi(X \cap Y) + \phi(X \cup Y) &= \frac{1}{2|V|} \cdot (|X_B \cap Y_B| - |X_B \cup Y_B|) \leq \frac{1}{2|V|} \cdot (|X_B| - |Y_B|) 
\end{align*}
and thus the second term is also $\geq 0$. 
This proves \eqref{eq:submodularity} in this case.

\smallskip
\noindent\textbf{Case 2.5: $X_A \subsetneq R = Y_A$ or $Y_A \subsetneq R = X_A$.} We assume wlog that it is the former. 
Note that $X_A \cap Y_A = X_A$ and $X_A \cup Y_A = Y_A = R$. 
Therefore,
\begin{align*}
\LHS &= f(X_A) + f(Y_A) + \frac{1}{2|V|} \cdot |X_B| + \frac{1}{4M|V|} \cdot g(Y_B),\\
\RHS &= f(X_A) + f(Y_A) + \frac{1}{2|V|} \cdot |X_B \cap Y_B| + \frac{1}{4M|V|} \cdot g(X_B \cup Y_B).
\end{align*} 

In the above, if $X_B = X_B \cap Y_B$ then it must be that $X_B \subseteq Y_B$. It follows that $Y_B = X_B \cup Y_B$ and we obtain equality in \eqref{eq:submodularity}. On the other hand, if $X_B \neq X_B \cap Y_B$, then $|X_B| \geq |X_B\cap Y_B| + 1$, and so we have 
\begin{align*}
\LHS - \RHS \geq \frac{1}{2|V|} + \frac{1}{4M|V|} \cdot (g(Y_B) - g(X_B \cup Y_B)) \geq \frac{1}{4|V|} > 0,
\end{align*} 
where we used the lemma assumption that the range of $g$ is within $[0,M]$. This again proves \eqref{eq:submodularity}.

\smallskip
\noindent\textbf{Case 2.6: $X_A = R \subsetneq Y_A$ or $Y_A = R \subsetneq X_A$.} 
Assume wlog that it is the former. Then we have
\begin{align*}
\LHS &= f(X_A) + f(Y_A) + 
\frac{1}{4M|V|} \cdot g(X_B) - \frac{1}{2|V|} |Y_B|, \\
\RHS &= f(X_A) + f(Y_A) + \frac{1}{4M|V|} \cdot g(X_B \cap Y_B) - \frac{1}{2|V|} \cdot |X_B \cup Y_B|.
\end{align*} 
The analysis from here is almost identical to that in \textbf{Case 2.5}.

\medskip
\noindent\textbf{(Case 3): $X_A \in \mathcal{H}, Y_A \notin \mathcal{H}$ or $Y_A \in \mathcal{H}, X_A \notin \mathcal{H}$.} We assume wlog that it is the former. 
Note that $f(Y_A) = 2$.
This case is further divide into three subcases below depending on the relation between $X_A$ and $R$.

\smallskip
\noindent\textbf{Case 3.1: $X_A = R$}. And so, $f(X_A) = 0$.
In this subcase, note that $X_A \cap Y_A \subsetneq R$ since $R$ isn't be a subset of $Y_A$, and $R \subsetneq X_A \cup Y_A$. 
So, $f(X_A \cap Y_A) = f(X_A \cup Y_A) = 1$.
Then,
\begin{align*}
\LHS &= f(X_A) + f(Y_A) + \frac{1}{4M|V|} \cdot g(X_B) = 2 + \frac{1}{4M|V|} \cdot g(X_B) \geq 2, \\
\RHS &= f(X_A \cap Y_A) + f(X_A \cup Y_A) + \frac{1}{2|V|} \cdot (|X_B \cap Y_B| - |X_B \cup Y_B|) \leq 2.
\end{align*}
where we used the non-negativity of $g$ in the argument about $\LHS$. In this case, we have established~\eqref{eq:submodularity}.

\smallskip
\noindent\textbf{Case 3.2: $X_A \subsetneq R$}. And so, $f(X_A) = 1$. In this case also, we have $X_A \cap Y_A \subsetneq R$. 
Also note that $X_A \cup Y_A \neq R$ since $Y_A$ is not a subset of $R$. Therefore,
\begin{align*}
\LHS &= f(X_A) + f(Y_A) + \frac{1}{2|V|} \cdot |X_B| = 3 + \frac{1}{2|V|} \cdot |X_B|,\\
\RHS &= f(X_A \cap Y_A) + f(X_A \cup Y_A) + \frac{1}{2|V|} \cdot |X_B \cap Y_B| + \frac{1}{2|V|} \cdot \phi(X \cup Y) \\
& = 1 + f(X_A \cup Y_A) + \frac{1}{2|V|} \cdot |X_B \cap Y_B| + \frac{1}{2|V|} \cdot \phi(X \cup Y) .
\end{align*}
If $X_A \cup Y_A \in \mathcal{H}$, then $f(X_A \cup Y_A) = 1$ and $\phi(X\cup Y) \leq |X_B \cup Y_B| \leq |V|$.
Thus, 
\[
\RHS \leq 2 + \frac{1}{2|V|}|X_B\cup Y_B| + \frac{1}{2|V|}|X_B\cap Y_B| \leq 2.5 + \frac{1}{2|V|}|X_B| < \LHS
\]
Thus, \eqref{eq:submodularity} holds.

If $X_A \cup Y_A \notin \mathcal{H}$, then $f(X_A \cup Y_A) = 2$ and $\phi(X\cup Y) = 0$, and so
$\RHS = 3 + \frac{1}{2|V|} \cdot |X_B \cap Y_B| \leq \LHS$ and thus \eqref{eq:submodularity} holds in this case as well.

\smallskip
\noindent\textbf{Case 3.3: $R \subsetneq X_A$}.
In this case, we have $R \subsetneq X_A \cup Y_A$ and then 
\begin{align*}
\LHS &= f(X_A) + f(Y_A) - \frac{1}{2|V|} \cdot |X_B| = 3 - \frac{1}{2|V|} \cdot |X_B|,\\
\RHS &= f(X_A \cap Y_A) + f(X_A \cup Y_A) + \frac{1}{2|V|} \cdot \phi(X \cap Y) - \frac{1}{2|V|} \cdot |X_B \cup Y_B| \\
& = 1 + f(X_A \cap Y_A) + \frac{1}{2|V|} \cdot \phi(X \cap Y) - \frac{1}{2|V|} \cdot |X_B \cup Y_B|.
\end{align*}
From here one can proceed similarly as in \textbf{Case 3.2} to prove \eqref{eq:submodularity}. 

Combining all the cases above, we established \eqref{eq:submodularity} which implies the submodularity of the function $F$. This completes the proof of the entire lemma. 
\end{proof}

\end{document}